\newcommand{\dom}{\mathrm{Dom}}
\newcommand{\cl}{\mathrm{Cl}}
\newcommand{\calP}{\mathcal{P}}
\newcommand{\att}{\mathrm{Att}}
\newcommand{\val}{\mathrm{Val}}
\newcommand{\ia}{\ensuremath{\mathrm{IA}}\xspace}
\newcommand{\ias}{\ensuremath{\mathrm{IAs}}\xspace}
\newcommand{\cia}{\ensuremath{\mathrm{CIA}}\xspace}
\newcommand{\pia}{\ensuremath{\mathrm{PIA}}\xspace}
\newcommand{\cias}{\ensuremath{\mathrm{CIAs}}\xspace}
\newcommand{\pias}{\ensuremath{\mathrm{PIAs}}\xspace}
\newif\iffullversion
\newcommand{\nullsymb}{*}
\def\boto{\mkern1.5mu\bot\mkern2.5mu}
\begin{document}
\title{Independence Under Incomplete Information}
%
%
\author{Miika Hannula\inst{1,2}\orcidID{0000-0002-9637-6664} \and
Minna Hirvonen\inst{2,3}\orcidID{0000-0002-2701-9620}\and
Juha Kontinen\inst{2}\orcidID{0000-0003-0115-5154}\and
Sebastian Link\inst{4}\orcidID{0000-0002-1816-2863}}
%
\authorrunning{M. Hannula et al.}
%
\institute{Institute of Computer Science, University of Tartu, Estonia\\ \email{miika.hannula@ut.ee}\\ \and
Department of Mathematics and Statistics, University of Helsinki, Finland\\ \email{juha.kontinen@helsinki.fi}\\ \and Institut für Theoretische Informatik, Leibniz Universität Hannover, Germany\\ \email{minna.hirvonen@thi.uni-hannover.de}  \and
Department of Computer Science, The University of Auckland, New Zealand\\ \email{s.link@auckland.ac.nz}\\}

%
\maketitle              
\begin{abstract}
  We initiate an investigation how the fundamental concept of independence can be represented effectively in the presence of incomplete information in relational databases. The concepts of possible and certain independence are proposed, and first results regarding the axiomatisability and computational complexity of implication problems associated with these concepts are established. In addition, several results for the data and the combined complexity of model checking are presented. The findings help reduce computational overheads associated with the processing of updates and answering of queries. 

\keywords{Axiomatisation \and
Implication problem \and
Independence \and
Incomplete information \and
Null value \and
Relational database.}
\end{abstract}
\section{Introduction}

The concept of independence is appealing to many fields. In databases, it describes when a relation is the cross product for some of its projections~\cite{DBLP:journals/jcss/Paredaens80}. Indeed, the cross product is one of the most fundamental operations, important to designing and querying databases~\cite{DBLP:books/daglib/0006733,DBLP:books/daglib/0011128}. Informally, a relation satisfies the independence atom (IA) $X\boto Y$ whenever for two tuples there is a third that has values matching with the first on all attributes in $X$ and values matching with the second on all attributes in $Y$~\cite{DBLP:journals/jcss/Paredaens80}. In other words, values on $X$ are independent of values on $Y$. For example, the relation in Table~\ref{tab:example} satisfies the IA $\textit{status}\boto \textit{gender}$. IAs form a decidable fragment~\cite{DBLP:journals/jcss/Paredaens80} of embedded multivalued dependencies whose implication problem is undecidable~\cite{herrmann95}. Similarly, probabilistic independence statements form a decidable fragment~\cite{geiger:1991} for the undecidable class of probabilistic conditional independence statements, fundamental in statistics and distributed computing~\cite{DBLP:journals/tit/Li23}. 

\begin{table}
\centering
\caption{Example relation $r$\label{tab:example}}
\begin{tabular}{ccccc}
	\emph{a(ge)} & \emph{e(ducation)} & \emph{s(tatus)} & \emph{r(ace)} & \emph{g(ender)} \\ \hline
	25 & bachelor & not-in-family & white & male \\
	25 & \nullsymb & in-family & white & male \\
	27 & \nullsymb & not-in-family & white & female \\
	\nullsymb & graduate & in-family & \nullsymb & female \\ 
\end{tabular}
\\~\\~\\
\parbox{.45\linewidth}{
\caption{Possible world $w_1$ of relation $r$ from Table~\ref{tab:example}\label{tab:world1}}
\begin{tabular}{ccccc}
	\ \emph{a} \ & \ \emph{e} \ & \ \emph{s}\ & \ \emph{r}\ & \ \emph{g} \ \\ \hline
	25 & bachelor & not-in-fml & white & m \\
	25 & bachelor & in-fml & white & m \\
	27 & graduate & not-in-fml & white & f \\
	27 & graduate & in-fml & white & f 
\end{tabular}
}
\hfill
\parbox{.45\linewidth}{
\caption{Possible world $w_2$ of relation $r$ from Table~\ref{tab:example}\label{tab:world2}}
\begin{tabular}{ccccc}
    \ \emph{a} \ & \ \emph{e} \ & \ \emph{s}\ & \ \emph{r}\ & \ \emph{g} \ \\ \hline
	25 & bachelor & not-in-fml & white & m \\
	25 & graduate & in-fml & white & m \\
	27 & bachelor & not-in-fml & white & f \\
	25 & graduate & in-fml & black& f 
\end{tabular}}
\end{table}

However, database relations are naturally incomplete, where missing values are denoted by null markers (denoted by $\nullsymb$ in this work). Nulls are common since they are used whenever an actual value is unknown at the time of data acquisition~\cite{DBLP:journals/tods/Codd79}, a value simply does not exist~\cite{DBLP:journals/sigmod/Codd86}, or no information about the value is known~\cite{DBLP:journals/vldb/KohlerLLZ16,DBLP:journals/jacm/Lien82,DBLP:journals/jcss/Zaniolo84}. Indeed, nulls accommodate flexibility within the rigid structure that relational databases enforce. If nulls are disallowed in any column, one may specify these columns as \texttt{NOT NULL} in SQL~\cite{DBLP:conf/bncod/Date82,DBLP:journals/sigmod/Grant08}, the de-facto language for defining and querying data. Similar situations occur in statistical models where zeros are often used to denote missing values~\cite{little2002statistical}. 

In this work we ask the following fundamental question: \emph{How can independence be represented when data is missing?} In fact, nulls have led to a broad and deep study of query answering in the presence of incomplete information~\cite{DBLP:journals/jacm/ImielinskiL84,DBLP:journals/tods/Libkin16}, and no single best solution has been proposed~\cite{DBLP:journals/pvldb/ToussaintGLS22}. A promising direction is the approach where nulls are universally interpreted as values that exist but are currently unknown~\cite{DBLP:journals/tods/Codd79}. This leads naturally to a possible world semantics of database relations, where a possible world is obtained by replacing the occurrence of every null marker by some actual domain value. Query answers are then certain or possible. The former are answers in every possible world, while the latter are answers in some possible world~\cite{DBLP:conf/kr/Libkin14}. In the context of our running example, the universal query \emph{Which levels of education are associated with all statuses?} has no certain but possible answers \texttt{bachelor} and \texttt{graduation}, see world $w_1$ in Table~\ref{tab:world1}. 

Interestingly, the worlds $w_1$ in Table~\ref{tab:world1} and $w_2$ in Table~\ref{tab:world2} demonstrate that the IA $e\boto s$ is possible but not certain. In contrast, since neither column \emph{status} nor \emph{gender} contain nulls, IA $s\boto g$ holds necessarily in every possible world, so it is certain. 
Consider the universal query that asks for each status associated with every gender. Since $r$ satisfies the certain IA $s\boto g$, its certain answers are simply all values in column \emph{status}. Generally, the certain (possible) answers to universal queries become certain (possible) answers to simple selection queries whenever the corresponding certain (possible) IA holds. For example, the possible answers to the universal query that returns levels of education associated with all values of status can be obtained by selecting simply all values of education, given the IA $e\boto s$ holds possibly.  

IAs are important for the most fundamental database operations, which are updates and queries. Firstly, IAs may express important semantic constraints that every database instance ought to comply with. As a consequence, updates can only be considered compliant when the instance resulting from the update satisfies every semantic constraint that has been specified on the database schema. This motivates the study of two computational problems associated with IAs: model checking and implication. While model checking refers directly to validating updates, efficient solutions to the implication problem enable us to minimise the overhead for validation, in other words we can reduce validation to non-redundant sets of IAs. Secondly, we have seen how IAs can lead to significant optimisations of expensive queries. For example, expensive universal queries~\cite{DBLP:journals/jcss/LeindersB07} reduce to simple selection queries whenever the underlying relation satisfies a corresponding IA. While checking independence may be as expensive as evaluating the query itself, the optimisation could already be applied if the IA is implied by the given set of constraints. For example, the possible answers to the universal query that returns all combinations of \textit{status} and \textit{gender} affiliated with all levels of \textit{education} cannot be returned as projection on these combinations since the possible IA $e\boto sg$ is not implied by the two possible IAs $e\boto s$ and $es\boto g$, as shown by the relation $r$ in Table~\ref{tab:example}. Indeed, as we will uncover later, the exchange rule, known from classical IAs over complete relations~\cite{DBLP:journals/jcss/Paredaens80,geiger:1991}, does not hold for possible IAs, but it does hold for combinations of possible and certain IAs. This illustrates the challenge and motivates a rigorous study of the underlying model checking and implication problems for possible and certain IAs, and their combination.

Informally, our contributions can be summarised as follows. (1) We propose the concepts of possible and certain IAs, (2) we establish several results regarding the axiomatisability and computational complexity of the implication problem associated with the individual and combined classes for possible and certain IAs, and (3) we establish several results for the data and the combined complexity of model checking for possible and certain IAs.
In the realistic setting of database instances with missing data, we can assign a possible and certain semantics to the classical concept of independence. Similarly to how database queries then have possible and certain answers, update operations may have possible and certain updates. This distinction comes at the prize of an overhead for computing such answers, checking possible and certain models, and deciding possible and certain implication.


\section{Preliminaries}\label{sec:prelim}

We begin the section by defining the underlying data model that accommodates incomplete information. We then introduce the syntax and semantics of possible and certain IAs, and define the implication problems that we will study in the latter sections of this paper.

\subsection{Relations with Incomplete Information}
The natural numbers $\mathbb{N}$ are taken to start from $1$ in this work. Given a natural number $n$, we write $[n]$ for the set $\{1, \dots ,n\}$.
Let $\att$ and $\val$ be disjoint infinite sets of symbols called \emph{attributes} and \emph{values}.
For attribute sets $X$ and $Y$, we often write $XY$ for their set union $X\cup Y$. 
A  \emph{relation schema} is a finite set $R=\{A_1,\ldots,A_n\}$ of attributes from $\mathfrak{A}$. Each attribute $A$ of a relation schema is associated with a domain $\dom(A)$ which is the set of values that can occur in the column $A$. 
In order to allow the data to contain incomplete information about the values of the attributes, we use a special null symbol $\nullsymb$, 
which represents an unknown attribute value. 
 We always assume that $\nullsymb\in\dom(A)$ and $|\dom(A)\setminus\{\nullsymb\}|\geq 2$. The latter assumption is made because an attribute with a domain that has only one non-null value, cannot contain incomplete information, because the unknown value could only be the one non-null value.


A \emph{tuple} over $R$ is a function $t:R\rightarrow\bigcup_{A\in R}\dom(A)$ with $t(A)\in\dom(A)$ for all $A\in R$. The tuple $t$ is called \emph{complete} if does not contain any nulls; that is, $t(A)\neq \nullsymb$ for all $A \in R$---otherwise it is called \emph{incomplete}.
The tuple $t$ is called a null tuple if $t(A)=\nullsymb$ for all $A\in R$. A non-null tuple is a tuple that is not a null tuple. 
For $X\subseteq R$, let $t(X)$ denote the restriction of the tuple $t$ over $R$ on $X$, and $\dom(X)=\prod_{A\in X}\dom(A)$ the Cartesian product of the domains of attributes in $X$. For example, the first tuple of relation $r$ in Table~\ref{tab:example} is complete but all remaining tuples are incomplete. However, the projections of all tuples onto \emph{status} and \emph{gender} are complete. 

A multiset is a pair $M=(B,m)$ consisting of a set $B$ and a multiplicity function $m\colon B\to\mathbb{N}$. The function $m$ determines for each $b\in B$ how many copies of $b$ the multiset $(B,m)$ contains. 
We sometimes say that $M$ \emph{contains} an element $b$, written $b \in M$, if $b$ is in the domain $B$ of $M$.
If $m$ is a constant function for some $n\in\mathbb{N}$, i.e., $m(i)=n$ for all $i\in\mathbb{N}$, we denote it by $n$. For example, $(B,1)$ corresponds to the set $B$ in the usual sense. 
A multiset $(B,m)$ is finite if the set $B$ is finite, and it is a \emph{included} in a multiset $(A,n)$ if $B \subseteq A$ and $m(a) \leq n(a)$ for all $a \in B$.

A \emph{relation} over $R$ is a finite multiset $r=(r',m)$, where $r'$ is a set of tuples over $R$. The relation $r$ is \emph{complete} if it contains only complete tuples, and otherwise incomplete. 
The \emph{projection} of $r$ on $X\subseteq R$ is defined as $r(X)=(r'(X), m')$, where $r'(X)=\{t(X)\mid t\in r'\}$ and $m'\colon r'(X)\to\mathbb{N}$ is such that $m'(t)=\sum_{s(X)=t} m(s)$. For example, the projection of relation $r$ from Table~\ref{tab:example} onto \emph{status} and \emph{race} consists of ((not-in-family,white),2), ((in-family,white),1) and ((in-family,\nullsymb),1). 
We define relations as multisets instead of the more common choice, sets, because we want to allow relations to have more than one copy of tuples that have null symbols. This is because two unknown values that are marked with the null symbol could have two different values if they were known. Our definition also allows multiple copies of tuples without null symbols, but the number of copies for those rows does not affect the satisfaction of the atoms that we will consider.


\subsection{Possible and Certain IAs and Implication Problems}

We introduce possible and certain variants of the IAs. These variants are based on groundings of incomplete relations, each representing a possible world obtained by replacing all null symbols by actual domain values. 

\begin{definition}[Grounding]
For a tuple $t$ and a relation $r$ over a schema $R$,
\begin{itemize}
    \item[(i)] 
    a \emph{grounding} of $t$ is any complete tuple $t'$ over $R$ obtained from $t$ by replacing its null symbols with non-null values (from their respective domains), and
    \item[(ii)] 
    a \emph{grounding} of $r$ is any relation $r'$ over $R$ that is obtained from $r$ by replacing its null symbols with non-null values (from their respective domains) for each copy of a tuple independently. 
\end{itemize}
 \end{definition}
For example, projections of all the groundings of the relation $r$ from Table~\ref{tab:example} onto \emph{status} and \emph{gender} are the same, while they can be different on \emph{education} and \emph{gender}.
In analogy to query answers~\cite{DBLP:conf/kr/Libkin14}, an    IA is possible (certain) whenever it is satisfied by some (all) grounding(s). 


\begin{definition}[Independence, possible and certain independence]
For a relation schema $R$, and $X,Y\subseteq R$, the expressions $X\boto Y$, $X\boto_p Y$, and $X\boto_c Y$ are called \emph{independence atom} (\ia), \emph{possible independence atom} (\pia), and \emph{certain independence atom} (\cia) over $R$, respectively.
For any of these atoms $\sigma$, we write $r\models \sigma$ to mean that a relation $r$ over $R$ satisfies $\sigma$, which is defined as follows:
\begin{itemize}
    \item[(i)] $r\models X\boto Y$ iff  $r(XY)$ is complete, and for all $t_1,t_2\in r$ there is some $t\in r$ such that $t(X)=t_1(X)$ and $t(Y)=t_2(Y)$,
    \item[(ii)] $r\models X\boto_p Y$ iff there is a grounding $r'$ of $r$ such that $r'\models X\boto Y$,
    \item[(iii)] $r\models X\boto_c Y$ iff every grounding $r'$ of $r$ satisfies ${r'\models X\boto Y}$.
\end{itemize}
\end{definition}
An \ia $X \boto Y$ is called \emph{disjoint} if $X$ and $Y$ do not intersect. \emph{Disjoint} \pias and \cias are defined analogously.
Of course, if a relation satisfies an IA, then the IA is certain; and if an IA is certain, then it is also possible. For example, the relation $r$ from Table~\ref{tab:example} satisfies $s\boto g$, $s\boto_c g$, and $s\boto_p g$. Furthermore, $r$ satisfies neither $e\boto_c s$ nor $r\boto_c r$, but it satisfies $e\boto_p s$ and $r\boto_p r$. 



Understanding the interaction between constraints provides us with means to control them in applications. The Introduction has already indicated that a deep understanding of the implication problem for independence statements has direct applications for the most fundamental tasks in data processing: update and query operations.
For a set of atoms $\Sigma$, we write $r\models \Sigma$ if and only if $r\models\sigma$ for all $\sigma\in\Sigma$. 
Let $\Sigma\cup\{\sigma\}$ be a set of atoms over $R$. We say that a set of atoms $\Sigma$ \emph{logically implies} the atom $\sigma$, written as $\Sigma\models\sigma$, if and only if for all relations $r$ over $R$, $r\models\Sigma$ implies $r\models\sigma$. For our running example, $\{e\boto_c s, es\boto_c g\}$ implies $e\boto_c sg$, $\{e\boto_p s, es\boto_p g\}$ does not imply $e\boto_p sg$, and 
$\{e\boto_c s, es\boto_p g\}$ implies $e\boto_p sg$. 
For two subclasses of \ias $\mathcal{P},\mathcal{Q}$, the \emph{$(\mathcal{P},\mathcal{Q})$-implication problem} is to decide whether $\Sigma\models \sigma$, for any finite set of atoms $\Sigma$ from $\mathcal{P}$ and any atom $\sigma$ from $\mathcal{Q}$. If $\mathcal{P}=\mathcal{Q}$, we refer to the $(\mathcal{P},\mathcal{Q})$-implication problem as the  \emph{implication problem for $\mathcal{P}$}.

\section{Axiomatisations}\label{axiomatisations}

Due to the central role of implication problems, our first goal is to establish axiomatic characterisations for combinations of PIAs and CIAs. Throughout, we will observe interesting differences to what is known from the idealised and well-known special case of having complete information. 

\subsection{Inference Rules}

We define possible and certain variants of inference rules known from the special case~\cite{geiger:1991,DBLP:conf/wollic/KontinenLV13,DBLP:journals/jcss/Paredaens80}, and also recall its major results.
Let $\mathcal{T},\mathcal{S},\mathcal{C},\mathcal{D},\mathcal{E}$ be the inference rules for IAs depicted in Table \ref{independence}. We use the subscript $c$ or $p$ for the rule that is obtained from the corresponding rule for IAs by replacing each $\boto$ symbol by the certain independence symbol $\boto_c$ or possible independence symbol $\boto_p$, respectively. The rules with subscript $p\& c$ are given in \Cref{pc-ind}.
We define the following sets of inference rules. Independence: $\mathfrak{I}=\{\mathcal{T},\mathcal{S},\mathcal{C},\mathcal{D},\mathcal{E}\}$, certain independence: $\mathfrak{I}_c=\{\mathcal{T}_c,\mathcal{S}_c,\mathcal{C}_c,\mathcal{D}_c,\mathcal{E}_c\}$, possible independence: $\mathfrak{I}_p=\{\mathcal{T}_p,\mathcal{S}_p,\mathcal{C}_p,\mathcal{D}_p\}$, and mixed exchange rules: $\mathfrak{I}_{p\& c}=\{\mathcal{E}_{p\& c},\mathcal{E}_{c\& p}\}$.
Note that the set $\mathfrak{I}_p$ is similar to the sets $\mathfrak{I}$ and $\mathfrak{I}_c$, except that the exchange rule is missing. This is a necessary omission, as Example \ref{notequiv} demonstrates that the exchange fails for possible independence. The implication of $X\boto_p Y$ by $X\boto_c Y$ follows from mixed exchange $\mathcal{E}_{c\& p}$ and trivial independence $\mathcal{T}_{p}$.
\begin{table}
\parbox{.45\linewidth}{
\[\fbox{$\begin{array}{c@{\hspace*{.25cm}}c}
\cfrac{}{X\boto\emptyset} & \cfrac{X\boto Y}{Y\boto X}\\
\text{(trivial ind., $\mathcal{T}$)} & \text{(symmetry, $\mathcal{S}$)}\\ \\
\cfrac{X\boto X\qquad Y\boto Z}{XY\boto Z}&\cfrac{X\boto YZ}{X\boto Y}
\\
\text{(constancy, $\mathcal{C}$)} & \text{(decomposition, $\mathcal{D}$)}\\ \\
\cfrac{X\boto Y\quad XY\boto Z}{X\boto YZ}  \\
\text{(exchange, $\mathcal{E}$)} 
\end{array}$}\]
\caption{Rules $\mathfrak{I}$ for independence \label{independence}}
}
\hfill
\parbox{.45\linewidth}{
\[\fbox{$\begin{array}{c@{\hspace*{.25cm}}c}
\cfrac{X\boto_p Y\quad XY\boto_c Z}{X\boto_p YZ} &   \\
\text{(mixed exchange, $\mathcal{E}_{p\& c}$)} & \\ \\
\cfrac{X\boto_c Y\quad XY\boto_p Z}{X\boto_p YZ} & \\
\text{(mixed exchange, $\mathcal{E}_{c\& p}$)} &
\end{array}$}\]
\caption{Rules $\mathfrak{J}_{p\& c}$ for possible and certain independence \label{pc-ind}}
}
\end{table}

Let $\mathfrak{A}$ be a set of axioms. We say that there is an $\mathfrak{A}$-deduction for an atom $\sigma$ from the set of atoms $\Sigma$, written as $\Sigma\vdash_{\mathfrak{A}}\sigma$, if and only if there is a finite sequence $(\tau_1,\dots,\tau_n)$ of atoms such that $\tau_n=\sigma$, and each $\tau_i$, $i\in\{1,\dots,n\}$, is either an element of $\Sigma$ or is obtained by applying a rule from $\mathfrak{A}$ to some atoms of $\Sigma\cup\{\tau_1,\dots,\tau_{i-1}\}$. We sometimes write $\Sigma\vdash\sigma$ instead of $\Sigma\vdash_{\mathfrak{A}}\sigma$, if the set of axioms $\mathfrak{A}$ is clear from the context.

Let $\mathfrak{A}$ be an axiomatisation, and $\mathcal{P}$ and $\mathcal{Q}$ classes of atoms. 
    We say that an axiomatisation $\mathfrak{A}$ is (i) \emph{sound} for the $(\mathcal{P},\mathcal{Q})$-implication problem iff for all sets of atoms $\Sigma$ of the class $\mathcal{P}$ and all atoms $\sigma$ of the class $\mathcal{Q}$, $\Sigma\vdash_{\mathfrak{A}}\sigma$  implies $\Sigma\models\sigma$, and (ii) \emph{complete} for the $(\mathcal{P},\mathcal{Q})$-implication problem iff for all sets of atoms $\Sigma$ of the class $\mathcal{P}$ and all atoms $\sigma$ of the class $\mathcal{Q}$, $\Sigma\models\sigma$ implies  $\Sigma\vdash_{\mathfrak{A}}\sigma$.
We write $\cl_{\mathfrak{A}}(\Sigma)$ for the closure of $\Sigma$ with respect to the axiomatisation $\mathfrak{A}$ consisting of axioms, i.e., $\cl_{\mathfrak{A}}(\Sigma)=\{\sigma\mid\Sigma\vdash_{\mathfrak{A}}\sigma\}$. For example, if $\Sigma$ consists of $e\boto_c s$, $es\boto_p g$ and $r\boto_p r$, then $rsg\boto_p e\in\cl_{\{\mathcal{E}_{c\& p},\mathcal{S}_{p},\mathcal{C}_{p}\}}(\Sigma)$ due to the following $\{\mathcal{E}_{c\& p},\mathcal{S}_{p},\mathcal{C}_{p}\}$-deduction:
\[
\begin{array}{lll}
& e\boto_c s & es\boto_p g \\
& \multicolumn{2}{l}{\overline{\mathcal{E}_{c\& p}:\hspace*{.1cm}e\boto_p sg\hspace*{.1cm}}} \\
r\boto_p r & \multicolumn{2}{l}{\overline{\mathcal{S}_{p}:\hspace*{.2cm}sg\boto_p e\hspace*{.2cm}}} \\
\multicolumn{3}{l}{\overline{\mathcal{C}_{p}:\hspace*{.7cm}rsg\boto_p e\hspace*{.7cm}}}
\end{array}\;.\]
For a CIA or PIA $\sigma$ define $ind(\sigma)$ as the corresponding IA, i.e., $ind(X \boto_c Y)=X \boto Y$ and $ind(X \boto_p Y)=X \boto Y$. We extend the definition of $ind$ to sets of CIAs and PIAs in the obvious way, i.e., $ind(\Sigma)=\{ind(\sigma)\mid \sigma\in\Sigma\}$. 

The following result about the idealised special case of complete relations is well-known from the research literature~\cite{geiger:1991,DBLP:conf/wollic/KontinenLV13,DBLP:journals/jcss/Paredaens80}.

\begin{theorem}\label{thm-complete}
The set $\mathfrak{I}$ forms a sound and complete axiomatisation for the implication problem for IAs.
\end{theorem}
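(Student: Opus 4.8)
This is a classical result about independence atoms over complete relations, so the plan is to treat it as a known soundness/completeness argument for the rule set $\mathfrak{I}=\{\mathcal{T},\mathcal{S},\mathcal{C},\mathcal{D},\mathcal{E}\}$. For soundness I would verify each of the five rules individually against the semantic definition of $\models$ for IAs over complete relations, i.e.\ that whenever a complete relation $r$ satisfies the premises of a rule it also satisfies the conclusion. Trivial independence ($\mathcal{T}$), symmetry ($\mathcal{S}$), and decomposition ($\mathcal{D}$) are immediate from the "for all $t_1,t_2$ there is a witness $t$" clause: for symmetry one simply swaps the roles of $t_1$ and $t_2$, and for decomposition one projects the witness onto the smaller attribute set. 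Constancy ($\mathcal{C}$) uses the observation that $X\boto X$ forces all tuples to agree on $X$, so $X$ behaves like a constant column that can be absorbed. The only rule needing real care is exchange ($\mathcal{E}$): from $X\boto Y$ and $XY\boto Z$ one must build, for arbitrary $t_1,t_2$, a single witness agreeing with $t_1$ on $X$ and with $t_2$ on $YZ$; this is done by a two-step witnessing argument, first applying $X\boto Y$ to get an intermediate tuple and then feeding it into $XY\boto Z$.

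**Completeness via a counterexample relation.** For completeness the standard route is to show that whenever $\Sigma\not\vdash_{\mathfrak{I}}\sigma$ there is a complete relation $r$ with $r\models\Sigma$ but $r\not\models\sigma$. Given the target atom $\sigma=X\boto Y$, I would analyse the closure $\cl_{\mathfrak{I}}(\Sigma)$ to extract the structural information it encodes — in particular the "constant" attributes (those $A$ with $\emptyset\boto A$ or $A\boto A$ derivable) and the partition of the non-constant attributes induced by the derivable independencies. The known characterisation is that $\mathfrak{I}$-derivability corresponds to a decomposition of the attribute set into a collection of independent blocks together with a set of constant attributes, and $X\boto Y$ is derivable exactly when $X$ and $Y$ can be separated across that block structure (after removing constants). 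The counterexample relation is then built as a carefully chosen cross product over the blocks, using at least two distinct domain values per non-constant block, arranged so that every derivable atom holds but $\sigma$, which crosses a block boundary, fails.

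**The main obstacle.** Since the excerpt explicitly cites this as well-known from Paredaens~\cite{DBLP:journals/jcss/Paredaens80}, Geiger et al.~\cite{geiger:1991}, and Kontinen et al.~\cite{DBLP:conf/wollic/KontinenLV13}, the honest proof plan is simply to reference those sources rather than reprove everything. The genuinely delicate part, were one to write it out in full, is the completeness direction: one must show that the syntactic closure under $\mathfrak{I}$ faithfully captures the block-and-constant structure, and then that the cross-product counterexample satisfies \emph{precisely} the derivable atoms and no others. Verifying that the constructed relation does not accidentally satisfy some non-derivable $X'\boto Y'$ requires showing the correspondence between the semantic independence structure of a cross product and the combinatorial block decomposition is tight in both directions. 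I expect this tightness — the "no accidental satisfaction" step — to be where all the real work lies, and it is exactly the content established in the cited literature.
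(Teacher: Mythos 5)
The paper gives no proof of this theorem at all --- it is stated as a known result and attributed to the cited literature --- and your proposal correctly identifies that deferring to those sources is the intended treatment. Your accompanying sketch (rule-by-rule soundness over complete relations, with the two-step witnessing argument for exchange, and completeness via a counterexample relation reflecting the constant/block structure of the closure) is an accurate outline of the standard argument from those references, so there is nothing to fault.
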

Indeed, a stronger result is known~\cite{Fagin82,geiger:1991,DBLP:conf/wollic/KontinenLV13} that allows us to construct a single complete relation, known as an Armstrong relation, that satisfies an IA if and only if it is implied by a given set of IAs. Hence, the implication problem reduces to a model checking problem on such a relation. This is known to be useful for acquiring those constraints perceived to encode business rules of the underlying application~\cite{Fagin82}. 
Let $\Sigma$ be a set of atoms of class $\mathcal{P}$ over a schema $R$. A relation $r$ over $R$ is called an \emph{Armstrong relation} for a set $\Sigma$ of atoms of class $\mathcal{P}$ if for all atoms $\sigma$ of class $\mathcal{P}$ over $R$, $r\models\sigma$ if and only if $\Sigma\models\sigma$. We say that a class $\mathcal{P}$ of atoms enjoys Armstrong relations, if every set of atoms $\Sigma$ of class $\mathcal{P}$ has an Armstrong relation.

\begin{theorem}\label{thm-armstrong-rel}
    The class of IAs enjoys Armstrong relations.
\end{theorem}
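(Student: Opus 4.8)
The plan is to build the required relation by combining, via a value-tupling product, a finite family of relations, each of which witnesses the failure of one non-implied IA. The crucial tool is a product operation $\otimes$ on complete relations over the same schema $R$, together with the observation that it preserves IA-satisfaction componentwise. Concretely, for complete relations $r_1,r_2$ over $R$ I would define $r_1\otimes r_2$ to be the complete relation whose tuples are $\langle t_1,t_2\rangle$ for $t_1\in r_1$ and $t_2\in r_2$, where $\langle t_1,t_2\rangle(A)$ is the pair $(t_1(A),t_2(A))$ read as a single value through a fixed injection of $\val\times\val$ into $\val$. The key lemma to prove is that, for all $X,Y\subseteq R$, we have $r_1\otimes r_2\models X\boto Y$ if and only if $r_1\models X\boto Y$ and $r_2\models X\boto Y$. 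This follows by unfolding the definition of satisfaction: a witness tuple $t$ for a pair $\langle a_1,b_1\rangle,\langle a_2,b_2\rangle$ in $r_1\otimes r_2$ is exactly a pair of witness tuples, one for $a_1,a_2$ in $r_1$ and one for $b_1,b_2$ in $r_2$, because the injection lets us match on $X$ and on $Y$ coordinatewise; the same bookkeeping covers the case $X\cap Y\neq\emptyset$, and hence the constancy behaviour of overlapping atoms.

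With the lemma in hand, the construction is short. Since $R$ is finite there are only finitely many IAs over $R$. For each IA $\tau$ with $\Sigma\not\models\tau$, the definition of logical implication provides a relation satisfying $\Sigma$ but not $\tau$, and by the complete-relation constructions underlying the completeness part of Theorem~\ref{thm-complete} this witness may be taken to be a complete relation $r_\tau$ with $r_\tau\models\Sigma$ and $r_\tau\not\models\tau$. I then set $r=\bigotimes_{\Sigma\not\models\tau} r_\tau$, using a one-tuple complete relation when this index set is empty (a single complete tuple satisfies every IA). The product is finite because it ranges over finitely many finite relations, and after re-encoding its pair-values injectively back into the domains it is a genuine relation over $R$.

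It remains to check the Armstrong property $r\models\sigma\iff\Sigma\models\sigma$ for every IA $\sigma$ over $R$, which is immediate from the lemma. If $\Sigma\models\sigma$, then every factor $r_\tau$ satisfies $\sigma$ (each factor models $\Sigma$), so the iterated product satisfies $\sigma$; if $\Sigma\not\models\sigma$, then $\sigma$ occurs among the indices and the corresponding factor $r_\sigma$ violates $\sigma$, so the product violates $\sigma$. Hence $r$ is an Armstrong relation for $\Sigma$, and the class of IAs enjoys Armstrong relations.

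The main obstacle I anticipate is the product lemma, specifically its forward direction and the careful treatment of the completeness requirement built into IA-satisfaction: one must verify that $r(XY)$ stays complete under $\otimes$ (which holds precisely because both factors are complete) and that the coordinatewise matching of witnesses survives the encoding of pairs as single values even when $X$ and $Y$ overlap. A secondary point worth stating explicitly is that the witness relations may indeed be taken complete, and that the domains are large enough to accommodate the finitely many re-encoded pair-values, which is guaranteed by $\val$ being infinite.
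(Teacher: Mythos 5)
The paper does not actually prove Theorem~\ref{thm-armstrong-rel}; it imports it from the cited literature (Fagin, Geiger et al., Kontinen et al.). Your argument---closure of IA-satisfaction under a value-tupling direct product, then a product over one complete counterexample per non-implied atom---is precisely the standard argument behind those references (it is Fagin's ``faithfulness to direct products implies Armstrong'' criterion instantiated for IAs), and it is correct: the product lemma holds in both directions (the forward direction needs the factors to be non-empty, which is automatic since an empty relation satisfies every IA and hence cannot be a counterexample), overlapping atoms reduce to constancy and are handled coordinatewise, and multiplicities are irrelevant on complete relations. The one point to tighten is your closing remark about domains: what must be large enough is $\dom(A)$, not $\val$, and the paper only guarantees $|\dom(A)\setminus\{\nullsymb\}|\geq 2$, so the injective re-encoding of the finitely many pair-values is only available if the attribute domains are assumed sufficiently large (e.g.\ infinite), which is the tacit assumption in the cited sources. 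You should also say a word on why the counterexample relations may be taken complete; this follows from Theorem~\ref{thm-complete}, since $\Sigma\not\models\sigma$ gives $\Sigma\not\vdash_{\mathfrak{I}}\sigma$ and the classical completeness construction produces a complete relation, which then also witnesses non-implication under the paper's semantics.
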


\subsection{Results for Certain and Possible IAs Separately}

As our first major result we establish the completeness of the axiomatisation $\mathfrak{I}_c$ for CIAs. This follows from the completeness of $\mathfrak{I}$ for IAs, by showing that if $\Sigma\not\vdash_{\mathfrak{I}_c} \sigma$, then $ind(\Sigma)\vdash_{\mathfrak{I}}ind(\sigma)$, and the relation that witnesses $ind(\Sigma)\not\models ind(\sigma)$ also witnesses $\Sigma\not\models \sigma$. The proof is included in Appendix \ref{appendix}.

\begin{theorem}\label{thm-completecert}
The set $\mathfrak{I}_c$ forms a sound and complete axiomatisation for the implication problem for CIAs.
\end{theorem}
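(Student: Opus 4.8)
The plan is to reduce the implication problem for \cias entirely to the implication problem for \ias, which is already settled by Theorem~\ref{thm-complete}, using the map $ind$ as a bridge between the two worlds. The first ingredient, which I would isolate as a lemma, is a purely syntactic correspondence between the two proof systems: since every rule of $\mathfrak{I}_c$ is obtained from the corresponding rule of $\mathfrak{I}$ by uniformly replacing $\boto$ with $\boto_c$, applying $ind$ to an $\mathfrak{I}_c$-deduction yields a valid $\mathfrak{I}$-deduction, and conversely any $\mathfrak{I}$-deduction from $ind(\Sigma)$ lifts back, because $ind$ is a bijection between \cias and \ias. Hence $\Sigma\vdash_{\mathfrak{I}_c}\sigma$ if and only if $ind(\Sigma)\vdash_{\mathfrak{I}}ind(\sigma)$, which lets me transfer both directions through the known result.

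For soundness I would use the semantic counterpart of $ind$, namely a grounding characterisation: by the definition of $\boto_c$, a relation $r$ satisfies a set $\Sigma$ of \cias if and only if every grounding $r'$ of $r$ (a complete relation) satisfies the \ias in $ind(\Sigma)$. Now assume $\Sigma\vdash_{\mathfrak{I}_c}\sigma$; by the correspondence $ind(\Sigma)\vdash_{\mathfrak{I}}ind(\sigma)$, and by soundness of $\mathfrak{I}$ (Theorem~\ref{thm-complete}) we obtain $ind(\Sigma)\models ind(\sigma)$. Given any $r\models\Sigma$, each grounding $r'$ satisfies $ind(\Sigma)$ and is complete, hence each $r'$ satisfies $ind(\sigma)$, which is exactly $r\models\sigma$. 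Thus $\Sigma\models\sigma$.

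For completeness I would argue by contraposition. Suppose $\Sigma\not\vdash_{\mathfrak{I}_c}\sigma$; by the correspondence $ind(\Sigma)\not\vdash_{\mathfrak{I}}ind(\sigma)$, and by completeness of $\mathfrak{I}$ (Theorem~\ref{thm-complete}) there is a \emph{complete} relation $r$ with $r\models ind(\Sigma)$ and $r\not\models ind(\sigma)$. The crucial point is that a complete relation is its own unique grounding, so for such $r$ the satisfaction of a \cia coincides with that of the corresponding \ia: $r\models X\boto_c Y$ iff $r\models X\boto Y$. Therefore the very same $r$ witnesses $r\models\Sigma$ and $r\not\models\sigma$, giving $\Sigma\not\models\sigma$.

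I expect the two conceptual ingredients, the grounding characterisation of \cia satisfaction and the triviality of groundings for complete relations, to carry all the weight, while the syntactic mirroring is essentially bookkeeping. The one place to be careful is the quantifier alignment in the soundness step: because ``certain'' quantifies universally over all groundings, IA soundness must be invoked on each grounding separately, and it is precisely this uniformity that makes the reduction go through without any rule beyond the literal translations in $\mathfrak{I}_c$; in particular, no analogue of a failing exchange rule is needed here, in contrast to the possible case.
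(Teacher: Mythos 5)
Your proposal is correct and follows essentially the same route as the paper: the syntactic correspondence $\Sigma\vdash_{\mathfrak{I}_c}\sigma$ iff $ind(\Sigma)\vdash_{\mathfrak{I}}ind(\sigma)$, reduction to Theorem~\ref{thm-complete}, and the observation that the complete witness relation satisfies a \cia{} exactly when it satisfies the corresponding \ia. The only cosmetic difference is that you derive soundness of $\mathfrak{I}_c$ from soundness of $\mathfrak{I}$ via the universal quantification over groundings, where the paper simply asserts that each rule is easily checked to be sound; both are fine.
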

Indeed, the implication problems for IAs and CIAs are equivalent in the following sense.
\begin{theorem}\label{equiv}
    Let $\Sigma\cup\{\sigma\}$ be a set of CIAs.  Then
    $\Sigma\models\sigma \iff ind(\Sigma)\models ind(\sigma).$
\end{theorem}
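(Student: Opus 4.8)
The plan is to prove the two implications separately by a direct semantic argument, relying on a single observation: a \emph{complete} relation $w$ satisfies $X\boto_c Y$ exactly when it satisfies $X\boto Y$, since the only grounding of $w$ is $w$ itself, and correspondingly every grounding of an arbitrary relation is by definition complete. I would also keep in mind that the implication problem for plain IAs is the idealised special case over complete relations, whereas implication for CIAs ranges over all, possibly incomplete, relations; reconciling these two quantifier ranges is the crux of the argument.

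For the direction $ind(\Sigma)\models ind(\sigma)\Rightarrow\Sigma\models\sigma$, I would take an arbitrary (possibly incomplete) relation $r$ with $r\models\Sigma$ and show $r\models\sigma$. Writing $\sigma = X\boto_c Y$, it suffices to show that every grounding $r'$ of $r$ satisfies $ind(\sigma)=X\boto Y$. Fix such a grounding $r'$. Since $r\models\Sigma$, every CIA in $\Sigma$ holds in $r$, so by definition every grounding of $r$---in particular $r'$---satisfies the corresponding IA; hence $r'\models ind(\Sigma)$. As $r'$ is complete and $ind(\Sigma)\models ind(\sigma)$, we obtain $r'\models ind(\sigma)$. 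Since $r'$ was an arbitrary grounding, $r\models\sigma$.

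For the converse $\Sigma\models\sigma\Rightarrow ind(\Sigma)\models ind(\sigma)$, I would take an arbitrary complete relation $w$ with $w\models ind(\Sigma)$ and show $w\models ind(\sigma)$. Because $w$ is complete, $w\models ind(\Sigma)$ is equivalent to $w\models\Sigma$, as each IA coincides with its CIA on $w$. Applying the hypothesis $\Sigma\models\sigma$ to the relation $w$ yields $w\models\sigma$, and completeness of $w$ again turns this into $w\models ind(\sigma)$.

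The one real subtlety---the place where I would be careful---is the asymmetry in the quantifier ranges: a CIA $X\boto_c Y$ may hold in an incomplete $r$ even when the IA $X\boto Y$ fails there, because $r(XY)$ need not be complete, so one cannot naively transfer satisfaction between the two atom types on arbitrary relations. The argument sidesteps this by only ever comparing the two atom types on complete relations (the witness $w$ in one direction, the groundings $r'$ in the other), where they provably agree. An alternative, equally short route would be to chain the two completeness theorems already available, $\Sigma\models\sigma\iff\Sigma\vdash_{\mathfrak{I}_c}\sigma\iff ind(\Sigma)\vdash_{\mathfrak{I}}ind(\sigma)\iff ind(\Sigma)\models ind(\sigma)$, where the middle equivalence is the evident bijection between $\mathfrak{I}_c$- and $\mathfrak{I}$-deductions obtained by adding or erasing the subscript $c$ on every atom; but I would prefer the direct semantic proof, since it does not presuppose completeness of $\mathfrak{I}_c$.
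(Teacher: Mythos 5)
Your proof is correct, but it takes a genuinely different route from the paper's. The paper derives the equivalence syntactically: it observes that $\mathfrak{I}_c$-deductions and $\mathfrak{I}$-deductions are in bijection (erase or add the subscript $c$ on every atom), and then chains $\Sigma\models\sigma\iff\Sigma\vdash_{\mathfrak{I}_c}\sigma\iff ind(\Sigma)\vdash_{\mathfrak{I}}ind(\sigma)\iff ind(\Sigma)\models ind(\sigma)$ using Theorems~\ref{thm-complete} and~\ref{thm-completecert} --- exactly the alternative you sketch and then set aside. Your direct semantic argument is sound: in one direction you compare the two atom types only on groundings $r'$, which are complete and satisfy $ind(\Sigma)$ whenever $r\models\Sigma$; in the other you compare them only on a complete witness $w$, where $X\boto_c Y$ and $X\boto Y$ provably coincide because $w$ is its own unique grounding. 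What your version buys is independence from the completeness of $\mathfrak{I}_c$ (Theorem~\ref{thm-completecert}), whose proof in the appendix in fact already contains the semantic content of your second direction (a complete countermodel for $ind(\Sigma)\not\models ind(\sigma)$ also refutes $\Sigma\models\sigma$); what the paper's version buys is brevity given machinery it needs anyway. One point worth making explicit either way: under the paper's satisfaction clause, an IA $X\boto Y$ fails in any relation with a null on $XY$, so if $ind(\Sigma)\models ind(\sigma)$ were read as quantifying over \emph{all} relations the statement would be false already for $\Sigma=\emptyset$ and $\sigma=X\boto_c\emptyset$; the theorem presupposes the classical reading over complete relations, which you correctly identify as the crux and which your restriction of $w$ to complete relations implements.
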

It follows that IAs in the idealised special case of having complete information correspond exactly to CIAs in the general case of permitting incomplete information. Intuitively, our finding assures us that known results from an idealised case hold in a more general and realistic context in the principled sense of certainty. This view extends further to Armstrong relations.   

\begin{theorem}\label{armstr}
    The class of CIAs enjoys Armstrong relations.
\end{theorem}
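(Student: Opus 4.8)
The plan is to obtain the Armstrong relation for a set of CIAs directly from the one that already exists for IAs, leveraging the equivalence of the two implication problems established in \Cref{equiv}. First I would fix a set $\Sigma$ of CIAs over a schema $R$ and pass to the associated set $ind(\Sigma)$ of IAs over $R$. By \Cref{thm-armstrong-rel}, and in fact by the stronger form recalled above that the Armstrong relation may be taken \emph{complete}, there is a complete relation $r$ over $R$ such that, for every IA $\tau$ over $R$, $r\models\tau$ if and only if $ind(\Sigma)\models\tau$. My claim is that this very relation $r$ is already an Armstrong relation for $\Sigma$ in the class of CIAs.

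To verify the claim I would use the observation that completeness of $r$ collapses certain independence to ordinary independence. Since $r$ contains no null symbols, its only grounding is $r$ itself, so for any CIA $\sigma=X\boto_c Y$ we have $r\models\sigma$ if and only if $r\models ind(\sigma)$, where $ind(\sigma)=X\boto Y$. Chaining the equivalences then gives, for every CIA $\sigma$ over $R$,
\[
r\models\sigma \iff r\models ind(\sigma) \iff ind(\Sigma)\models ind(\sigma) \iff \Sigma\models\sigma,
\]
where the middle step is the Armstrong property of $r$ for $ind(\Sigma)$ and the last step is \Cref{equiv}. This is exactly the defining property of an Armstrong relation for $\Sigma$, so the claim and hence the theorem follow.

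The only point that requires care — and the step I would treat as the main, though mild, obstacle — is the passage from ``$r$ is complete'' to ``$r\models X\boto_c Y \iff r\models X\boto Y$'': one must check that the groundings of a null-free relation are trivial and that the completeness clause in the definition of IA satisfaction is automatically met when $r(XY)$ contains no nulls. Once this local observation is in place, no new combinatorial construction is needed; the result is essentially a corollary of \Cref{thm-armstrong-rel} and \Cref{equiv}, mirroring the way \Cref{thm-completecert} was derived from the complete-information case.
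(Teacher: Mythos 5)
Your proposal is correct and follows essentially the same route as the paper: take the complete Armstrong relation $r$ for $ind(\Sigma)$, observe that completeness collapses $\boto_c$ to $\boto$, and chain the equivalences. The only cosmetic difference is that you invoke \Cref{equiv} directly where the paper unfolds it into the underlying deduction-level equivalences via \Cref{thm-completecert} and \Cref{thm-complete}.
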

Apart from positioning known results in the broader framework of incomplete information, we also need to consider the case where IAs may possibly hold. Here, things are different. Indeed, Theorem \ref{equiv} does not hold if we consider PIAs instead of  CIAs, because the exchange rule is not sound in the possible case. This is demonstrated by the following example.
\begin{example}\label{notequiv}
There exists a set $\Sigma\cup\{\sigma\}$ of PIAs such that
    $ind(\Sigma)\models ind(\sigma)$, but $\Sigma\not\models\sigma$.
    Let $\Sigma=\{A\boto_p B, AB\boto_p C\}$ and $\sigma= A\boto_p BC$. Then $ind(\Sigma)=\{A\boto B, AB\boto C\}$ and $ind(\sigma)= A\boto BC$, and by the soundness of exchange $\mathcal{E}$, $ind(\Sigma)\models ind(\sigma)$. Consider then the relation $r$ depicted in Table \ref{exchangefig}.
    We have $r\models\Sigma$, because $r$ has groundings $r'$ and $r''$ such that $r' \models A\boto B$ and $r''\models AB\boto C$. But $r\not\models A\boto_p BC$, since there is no way to replace all null symbols in the column $A$ such that $A\boto BC$ would hold. This is because $r$ has two different values for $A$ and four different values for $BC$, making it impossible to fit all of the eight different values for $ABC$ in just four rows.
\end{example}
\begin{table}
\parbox{.45\linewidth}{
    \centering
\begin{tabular}{ c c c }  
 $A$ & $B$ & $C$   \\  
 \hline
 0 & 0 & 0  \\
 $\nullsymb$ & 1 & 0  \\
 $\nullsymb$ & 0 & 1 \\
 1 & 1 & 1 \\
\end{tabular}
 \quad
       \begin{tabular}{ c c c }      
$A$ & $B$ & $C$   \\  
 \hline
 0 & 0 & 0  \\
 0 & 1 & 0  \\
 1 & 0 & 1  \\
 1 & 1 & 1  \\  
 \end{tabular}
 \quad
       \begin{tabular}{ c c c }   
 $A$ & $B$ & $C$   \\  
 \hline
 0 & 0 & 0  \\
 1 & 1 & 0  \\
 0 & 0 & 1  \\
 1 & 1 & 1  \\
 \end{tabular}
 \\~\\
   \caption{The relations $r$, $r'$, and $r''$.}
    \label{exchangefig}
}
\hfill
\parbox{.45\linewidth}{
     \begin{tabular}{c c c c c c c c|c}
       $X_1$ & $X_2$ & $X_3$ & $Y_1$ & $Y_2$ & $Z_1$ & $\dots$ & $Z_n$ & $\#$   \\
       \hline
       0 & 0 & 0 & 0 & 0 & 0 & $\dots$ & 0 & 1   \\
       0 & 0 & 1 & 0 & 1 & 0 & $\dots$ & 0 & 1   \\
       0 & 1 & 0 & 1 & 0 & 0 & $\dots$ & 0 & 1   \\
       0 & 1 & 1 & $\nullsymb$ & $\nullsymb$ & 0 & $\dots$ & 0 & 1   \\
       1 & 0 & 0 & $\nullsymb$ & $\nullsymb$ & 0 & $\dots$ & 0 & 1   \\
       1 & 0 & 1 & $\nullsymb$ & $\nullsymb$ & 0 & $\dots$ & 0 & 1   \\
       1 & 1 & 0 & $\nullsymb$ & $\nullsymb$ & 0 & $\dots$ & 0 & 1   \\
       1 & 1 & 1 & $\nullsymb$ & $\nullsymb$ & 0 & $\dots$ & 0 & 1   \\
       $\nullsymb$ & $\nullsymb$ & $\nullsymb$ & $\nullsymb$ & $\nullsymb$ & 0 & $\dots$ & 0 & 15  \\
    \end{tabular}
    \\~\\
    \caption{The relation $r$ from the proof of Thm. \ref{thm-completepos} in the case $k=3$, $m=2$.
    }
    \label{fig:possibleind}}
\end{table}
The invalidity of the exchange rule for PIAs also means that the completeness proof for the axiomatisation of IAs does not work for PIAs. However, the set $\mathfrak{I}_p$ forms a complete axiomatisation for a restricted version of the implication problem for PIAs.
\begin{theorem}\label{thm-completepos}
The set $\mathfrak{I}_p$ 
forms a sound and complete axiomatisation for the $(\pia,\pia^*)$-implication problem, where $\pia^*$ is the class of \pias $X\boto_p Y$ such that at least one of the following conditions hold:

\centering
    (i) $|X|=1$ or $|Y|=1$, \qquad or \qquad (ii) $||X|-|Y||\leq 1$.
\end{theorem}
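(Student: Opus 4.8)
The plan is to establish soundness and completeness separately, proving completeness via the contrapositive with an explicit counterexample relation of the shape shown in Table~\ref{fig:possibleind}.

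For soundness it suffices to check that each rule in $\mathfrak{I}_p$ preserves logical implication. For $\mathcal{T}_p$, $\mathcal{S}_p$ and $\mathcal{D}_p$ a single grounding works: any grounding witnessing the premise as a \pia already witnesses the conclusion, because the corresponding IA-rule in $\mathfrak{I}$ is sound. The only rule needing care is constancy $\mathcal{C}_p$, whose two premises may be witnessed by \emph{different} groundings. Here I would start from a grounding $g$ of $r$ with $g\models Y\boto Z$ and re-ground every column $A\in X$ to its unique non-null value, which exists because $r\models X\boto_p X$ forces all non-null entries of each $X$-column to coincide; this yields a grounding $g'$ in which $X$ is constant. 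The key observation is that constant-ifying a column is a coordinatewise map that commutes with Cartesian products, so it preserves the product structure and hence $g'\models Y\boto Z$; since $X$ is now constant in $g'$, the witness for $Y\boto Z$ also witnesses $XY\boto Z$, so $r\models XY\boto_p Z$. This establishes soundness for all \pias, a fortiori on $\pia^*$.

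For completeness I would argue contrapositively: assuming $\Sigma\not\vdash_{\mathfrak{I}_p}\sigma$ for $\sigma=X\boto_p Y\in\pia^*$, I would construct a relation $r$ with $r\models\Sigma$ and $r\not\models\sigma$. The first step is a syntactic description of $\cl_{\mathfrak{I}_p}(\Sigma)$: since the only rule combining two atoms is $\mathcal{C}_p$ (which needs one premise of the form $P\boto_p P$) and exchange is absent, every derivation uses a single non-constant atom of $\Sigma$ together with the set $\mathrm{Con}=\{A:\Sigma\vdash_{\mathfrak{I}_p}A\boto_p A\}$ of derivably-constant attributes. Concretely, $\Sigma\vdash_{\mathfrak{I}_p}P\boto_p Q$ iff $P\subseteq\mathrm{Con}$, or $Q\subseteq\mathrm{Con}$, or there is $U\boto_p V\in\Sigma$ with $P\setminus\mathrm{Con}\subseteq U$ and $Q\setminus\mathrm{Con}\subseteq V$ (up to symmetry). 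After disposing of the degenerate subcases where a side of $\sigma$ is constant or $X\cap Y\neq\emptyset$ and making all attributes outside $X\cup Y$ constant, I would take the relation of Table~\ref{fig:possibleind}: the columns $X$ enumerate all $2^{k}$ vectors of $\{0,1\}^{k}$ (with $k=|X|$), the columns $Y$ carry the $m+1$ vectors $0,e_1,\dots,e_m$ (the zero vector and the standard unit vectors, $m=|Y|$) in the first $m+1$ rows and nulls elsewhere, every other attribute is held constant, and $2^{k}m-1$ further rows are null on $XY$.

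Refuting $\sigma$ is then a counting argument: every grounding exhibits at least $2^k$ distinct $X$-values and at least $m+1$ distinct $Y$-values, so realising $X\boto Y$ would require $2^k(m+1)$ tuples, whereas $r$ has only $2^k+(2^km-1)=2^k(m+1)-1$ of them; hence no grounding satisfies $X\boto Y$ and $r\not\models\sigma$. The main obstacle, and the technical heart of the proof, is the remaining verification that $r\models\Sigma$. Here one exploits that the chosen vectors project cleanly---any $j$-subset of $Y$ projects them onto exactly $j+1$ distinct values, while any $i$-subset of $X$ already carries all $2^i$ values---so that each atom of $\Sigma$ reduces to a demand for a bounded number of value-combinations, which the null rows are then grounded to supply. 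Matching these demands against the row budget $2^k(m+1)-1$, while using that $\sigma\notin\cl_{\mathfrak{I}_p}(\Sigma)$ to bound how much of $X$ and $Y$ any single atom of $\Sigma$ can cover, is exactly where the hypotheses $|X|=1$, $|Y|=1$, or $||X|-|Y||\leq 1$ enter: they are what make the budget suffice, whereas outside this regime the counting no longer closes. I expect the careful case analysis underlying this feasibility check to be by far the most laborious part of the argument.
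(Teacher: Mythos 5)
Your outline matches the paper's: soundness by checking each rule (your treatment of $\mathcal{C}_p$ via re-grounding the $X$-columns to their unique non-null values is essentially the argument needed), and completeness by the contrapositive, reducing to a disjoint, constant-free atom and exhibiting a relation that enumerates all of $\{0,1\}^{k}$ on $X$, puts a deficient set of values on $Y$, pads with null rows, and has exactly one row fewer than the product count demands. Your construction does differ in one respect: you place only the $m+1$ values $0,e_1,\dots,e_m$ on $Y$ (budget $2^k(m+1)-1$), which neatly unifies the paper's two disjoint subcases ($|Y|\geq 2$ and $|Y|=1$), whereas the paper uses $\{0,1\}^m\setminus\{(1,\dots,1)\}$ (budget $2^k(2^m-1)-1$).

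The genuine gap is that you never carry out the verification $r\models\Sigma$ — you yourself flag it as "by far the most laborious part" and leave it at "the null rows are then grounded to supply" the needed combinations. That verification is the entire content of the completeness proof, and it is where the theorem can fail: one must show, for every $V\boto_p W\in\Sigma$ reduced to a partition $VW=XY$ with $\{V,W\}\neq\{X,Y\}$, that some grounding realises a full Cartesian product within the row budget. The paper does this by a case split on $|V|$ (the hypothesis $|X|-|Y|\leq 1$ is used precisely to exclude $|V|\leq k-1$), and the delicate case is $|V|=k$, $|W|=m$: there both $V$ and $W$ meet $Y$, and the deliberately omitted $Y$-tuple forces \emph{both} projection counts to drop below their maxima ($2^k-1$ and $2^m-1$), which is exactly what makes the product fit into $2^k(2^m-1)-1$ rows. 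Your sparser construction makes the refutation of $\sigma$ easy, but it also shrinks the budget drastically for $m\geq 3$, so the analogous feasibility check (including that the pairs already forced by the $2^k$ partially fixed rows are consistent with a product set small enough, and that the $2^km-1$ null rows cover the remaining pairs) is not routine and must be done case by case before the proof can be considered complete. Until that counting argument is written out — or you fall back on the paper's choice of $Y$-values, for which Table~\ref{fig:possibleind} and the accompanying case analysis supply it — the completeness direction is unproven.
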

\begin{proof}
    The soundness of the axiomatisation is again easy to show by checking that all of the inference rules are sound.
    We show that the axiomatisation is complete. Let $\Sigma$ be a set of PIAs and $X\boto_p Y$ a PIA. Suppose that $\Sigma\not\vdash X\boto_p Y$. We show that $\Sigma\not\models X\boto_p Y$. Note that if $\Sigma\vdash A\boto_c A$, we can use constancy $\mathcal{C}_p$ to obtain $\Sigma\not\vdash X\setminus\{A\}\boto_p Y\setminus\{A\}$, and it suffices to show the claim for this atom. Hence, we may assume that there are no $A\in XY$ such that $\Sigma\vdash A\boto_p A$. 
   The case that $X\boto_p Y$ is non-disjoint is considered in the full version of the proof in Appendix \ref{appendix}.
   The case that $X\boto_p Y$ is disjoint can be handled in two parts. The first case that $|X|-|Y|\leq 1$ and $|Y|\geq 2$. The second case is that $|Y|=1$. Note that by symmetry we may assume that $|X|\geq |Y|$, so it suffices to consider these two cases. 
We consider the first case only. The proof of the second case can be found in the full version of the proof.

    For the first case, let $|X|=k\geq m=|Y|$, and assume that $m\geq 2$. Let $\dom(A)=\{0,1,\nullsymb\}$ for all $A\in R$. Construct then a relation $r$ over $R$ as follows. The relation $r$ consists of $2^{k}(2^m-1)-1$ rows. On the first $2^{k}$ rows, let the values of $X$ be the tuples from $\{0,1\}^{k}$. For the first $2^m-1$ rows of these $2^{k}$ rows, let the values of $Y$ be the tuples from $\{0,1\}^m\setminus \{(1,1,\dots,1)\}$ and for the rest of the $2^{k}-(2^m-1)$ rows, let the values of $Y$ be null symbols. Then add rows with only null symbols such that you obtain $2^{k}(2^m-1)-1$ rows in total. Let all the values for the attributes $A\in R\setminus XY$ be constant 0.
    
We show that $r\models\Sigma$, but $r\not\models X\boto_p Y$. First note that $r(X)$ has $2^{k}$ different complete tuples, and $r(Y)$ has $2^m-1$ different complete tuples. 
    Therefore, it is impossible for $X\boto_p Y$ to hold in the relation $r$, because $r$ has only $2^{k}(2^m-1)-1$ tuples.
     Suppose then that $V\boto_p W \in\Sigma$. We will show that $r\models V\boto_p W$. Since every attribute $A\in R\setminus XY$ is constant, we may assume that $VW\subseteq XY$, and therefore also $V\cap W=\emptyset$. (For the latter, recall that there are no $A\in XY$ such that $\Sigma\vdash A\boto_p A$.) Moreover, we may assume that $VW=XY$, because by decomposition $\mathcal{D}_p$, $r\models V\boto_p W$ implies $r\models V'\boto_p W'$ for every $V'\subseteq V$ and $W'\subseteq W$. 
    We may assume that $|V|\geq|W|$. 
    
    Suppose first that $|V|\geq k+1$. This means that $r(V)$ has at most $2^{k}$ different non-null tuples and $r(W)$ has at most $2^{k+m-|V|}\leq 2^{m-1}$ different non-null tuples. Then in order to $V\boto_p W$ hold, we need to fit at most $2^{k}\cdot 2^{m-1}$ values of $XY$ in the relation. Since we have $2^{k}(2^m-1)-1(>2^{k}\cdot 2^{m-1})$ rows in $r$, and no non-null tuple is repeated in $r$, we have enough room to contain all the needed rows. This can be done by a grounding $r'$ of $r(XY)$ that replaces all the null symbols on the first $2^k$ rows with 0s, and replaces the null symbol rows with tuples from the Cartesian product of the projections of these first $2^k$ rows on $V$ and $W$. Since replacing all the null symbols on the first $2^k$ rows with 0s preserves the amounts of different non-null tuples, the cardinality approximations above hold for $r'$. Thus $r\models V\boto_p W$.

    Suppose then that $|V|=k$. Since $|V|+|W|=|VW|=|XY|=|X|+|Y|=k+m$, we have $|W|=m$. Since $V\neq X$ and $W\neq X$, there are $A\in V$ and $B\in W$ such that $A,B\in Y$. As the relation $r$ does not contain the tuple where every attribute of $Y$ has value 1, it must be that by picking suitable values for the null symbols on the first $2^k$ row, we obtain at most $2^{k}-1$ different values for $V$ and $r(W)$ has at most $2^m-1$ different values for $W$. Since no non-null tuple is repeated in $r$, we have enough room to contain all the needed rows. This can be done by a grounding $r'$ of $r(XY)$ that replaces the null symbols on the first $2^k$ rows such that we obtain at most $2^{k}-1$ and $2^{m}-1$ different values for $V$ and $W$, respectively. The grounding $r'$ replaces the null symbol rows with tuples from the Cartesian product of the projections of these first $2^k$ rows on $V$ and $W$, as before. Thus $r\models V\boto_p W$.
Note that it cannot be that $|V|\leq k-1$. Otherwise, from $|X|-|Y|\leq 1$ and $|X|=k$, it follows that  $k+(k-1)\leq|X|+|Y|=|XY|=|VW|=|V|+|W|\leq k-1+|W|$, i.e., $|W|\geq k$, which is impossible because $|W|\leq|V|\leq k-1$. This finishes the proof in the case that $m\geq 2$. Table \ref{fig:possibleind} depicts the relation $r$ in the case $k=3$, $m=2$, where $X=\{X_1,X_2,X_3\}$, $Y=\{Y_1,Y_2\}$, and $R\setminus XY=\{Z_1,\dots,Z_n\}$.

\end{proof}
 For our running example, the incomplete relation $r$ of Table~\ref{tab:example} shows that $\sigma=e\boto_p sg$ is not implied by the set $\Sigma$ consisting of $e\boto_p s$ and $es\boto_p g$. Indeed, while grounding $w_1$ of $r$ in Table~\ref{tab:world1} satisfies $e\boto s$ and grounding $w_2$ of $r$ in Table~\ref{tab:world2} satisfies $es\boto p$, there cannot be any grounding of $r$ that satisfies both. Theorem~\ref{thm-completepos} shows that there is no $\mathfrak{I}_p$-deduction of $\sigma$ from $\Sigma$.
It is future work to investigate the axiomatisability for other fragments of PIAs.

\subsection{Combining Certain and Possible IAs} 

The most general implication problem combines the classes of CIAs and PIAs. In this context, the following theorem shows that, when restricted to disjoint atoms, we may add PIAs to the set $\Sigma$, and still obtain a complete axiomatisation.
\begin{theorem}\label{thm-completedisjointcertpos_cert}
The set $(\mathfrak{I}_c\cup\mathfrak{I}_p\cup\mathfrak{J}_{p\& c})\setminus\{\mathcal{C}_c,\mathcal{C}_p\}$ forms a sound and complete axiomatisation for the restriction of the $(\cia\cup\pia,\cia)$-implication problem to disjoint atoms.
\end{theorem}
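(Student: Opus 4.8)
Soundness is checked rule by rule; the only non‑routine cases are the two mixed exchange rules, and both hold because a CIA premise is satisfied by \emph{every} grounding. For $\mathcal{E}_{c\& p}$, if $r\models X\boto_c Y$ and $r\models XY\boto_p Z$, I pick a grounding $r'$ witnessing $XY\boto Z$; since $X\boto_c Y$ holds in all groundings we also have $r'\models X\boto Y$, so ordinary exchange $\mathcal{E}$ gives $r'\models X\boto YZ$, i.e. $r\models X\boto_p YZ$, and $\mathcal{E}_{p\& c}$ is symmetric. The plan for completeness rests on a structural observation: the conclusion of every rule producing a \cia{} — namely $\mathcal{T}_c,\mathcal{S}_c,\mathcal{D}_c,\mathcal{E}_c$ — has only \cias{} as premises, whereas $\mathfrak{I}_p$ and $\mathfrak{J}_{p\& c}$ only ever produce \pias{}. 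Hence any deduction of a \cia{} $\sigma$ from $\Sigma$ uses only the certain rules applied to the \cias{} of $\Sigma$; writing $\Sigma_c$ for these, $\Sigma\vdash\sigma$ iff $\Sigma_c\vdash\sigma$. Assuming $\Sigma\not\vdash\sigma$ I thus get $\Sigma_c\not\vdash\sigma$, and since constancy is dispensable for disjoint atoms (it can only be applied through a premise $A\boto_c A$, which disjoint premises never derive), Theorems~\ref{thm-completecert} and~\ref{equiv} yield $ind(\Sigma_c)\not\models ind(\sigma)$, where $\sigma=X\boto_c Y$.

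\textbf{Skeleton of the counterexample.} It remains to build one relation $r$ with $r\models\Sigma$ and $r\not\models\sigma$. I would keep the columns $XY$ null‑free and place on them a complete relation that fails $X\boto Y$ while satisfying $ind(\Sigma_c)$; such a relation exists by Theorems~\ref{thm-complete} and~\ref{thm-armstrong-rel}, since $ind(\Sigma_c)\not\models X\boto Y$. Because $XY$ carries no nulls, every grounding of $r$ agrees with it on $XY$, so $X\boto Y$ fails in \emph{every} grounding and hence $r\not\models X\boto_c Y=\sigma$. The remaining columns are where I place the nulls that give the \pias{} of $\Sigma_p$ their freedom.

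\textbf{The crux: preserving every \cia{} under all groundings.} The tension is that a \cia{} must hold in every grounding, while nulls added for the \pias{} may be grounded adversarially. The enabling observation — visible already in the running example — is that a \emph{non‑trivial} independence having a null in its columns is broken by some grounding, but an independence is \emph{trivially} true in all groundings as soon as one of its two sides is constant. The plan is therefore to make a carefully chosen set $C$ of attributes constant: every $V\boto_c W\in\Sigma_c$ with $V\subseteq C$ or $W\subseteq C$ then holds in all groundings regardless of nulls elsewhere, so nulls may be placed freely in the complementary columns, while the remaining \cias{} are kept null‑free and satisfied structurally. The target must not be trivialised, i.e. $X\not\subseteq C$ and $Y\not\subseteq C$, which is compatible with failing $X\boto Y$ on the free part. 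With the \cias{} protected in this way, each \pia{} $V\boto_p W$ is then satisfied by \emph{its own} grounding, which completes the required $V$–$W$ product from the null‑tuples — different \pias{} may use different groundings, exactly as in the proof of Theorem~\ref{thm-completepos} — and every such grounding still satisfies the \cias{}, which are either trivial or null‑free.

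\textbf{Main obstacle.} The hard part is proving that a suitable constant set $C$ — more generally, a partition of the attributes into constant, null‑free, and null columns — always exists when $ind(\Sigma_c)\not\models X\boto Y$, so that every \cia{} is protected, every \pia{} has enough null freedom, and the target still fails. This is the combinatorial heart of the argument and is where disjointness is used essentially: one must show that the requirement ``each \cia{} has a wholly constant side while neither side of the target is wholly constant'' (relaxed by allowing some \cias{} to be satisfied structurally on null‑free columns) is satisfiable \emph{precisely} because $X\boto Y$ is not implied by $ind(\Sigma_c)$. I expect the most delicate bookkeeping to be satisfying all \pias{} through possibly distinct groundings without disturbing the protected \cias{} or accidentally restoring the missing $X$–$Y$ product that makes $\sigma$ fail.
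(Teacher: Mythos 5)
Your soundness argument and the reduction to $\Sigma_c\not\vdash_{\mathfrak{I}_c}\sigma$ (hence $ind(\Sigma_c)\not\models ind(\sigma)$) are fine and match the paper's Theorem~\ref{poscertthm}. The gap is in the countermodel. Your skeleton commits to keeping the columns of $XY$ null-free, placing all nulls in $R\setminus XY$. This cannot work. Take $\Sigma=\{A\boto_c B,\ AB\boto_p C\}$ and $\sigma=A\boto_c BC$, all disjoint; here $\Sigma_c=\{A\boto_c B\}$, so $ind(\Sigma_c)\not\models A\boto BC$ and a countermodel must exist. But if $r$ is null-free on $XY=ABC$, then $r\models AB\boto_p C$ collapses to $r\models AB\boto C$ and $r\models A\boto_c B$ to $r\models A\boto B$, whence ordinary exchange $\mathcal{E}$ forces $r\models A\boto BC$ in the unique grounding, i.e.\ $r\models\sigma$. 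The whole point of the theorem is that mixed-exchange-style inferences through \pias{} do \emph{not} yield the \cia{} conclusion, and witnessing this requires the nulls to sit \emph{inside} $XY$: the grounding that realises the \pias{} must differ from a grounding that refutes $X\boto Y$. Your Armstrong-relation-on-$XY$ idea also wavers between satisfying $ind(\Sigma_c)$ (then \pias{} confined to $XY$ may fail outright) and satisfying $ind(\Sigma)$ (then such a relation need not exist, as above). Beyond this, you explicitly leave the ``combinatorial heart'' --- the existence of the constant/null-free/null partition --- as an unproved obstacle, so the proposal is a plan rather than a proof even on its own terms.

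For comparison, the paper avoids all of this bookkeeping: after reducing $\sigma=X\boto_c Y$ to a \emph{minimal} non-derivable atom, it fixes one attribute $A_1\in X$, takes the parity relation $r_1=\{t\mid t(A_1)=\sum_{A\in R\setminus A_1}t(A)\bmod 2\}$ over $\{0,1\}$ on $XY$ (constant $0$ elsewhere), and sets $r=(r_1\cup r_2,1)$ where $r_2$ nulls out $A_1$ in each tuple of $r_1$. One grounding yields the full Cartesian product on $XY$, satisfying \emph{all} \pias{} simultaneously; another yields $(r_1,2)$, refuting $X\boto Y$; and every grounding satisfies the \cias{} of $\Sigma$ because a \cia{} covering all of $XY$ would, by minimality and the certain rules, let one derive $X\boto_c Y$. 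Minimality of $\sigma$ plus nulls in a single column of $X$ is the idea your proposal is missing.
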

\begin{proof}
    The soundness of the axiomatisation again follows from the soundness of the inference rules.
    We show that the axiomatisation is complete. Let $\Sigma$ be a set of disjoint CIAs and PIAs, and $X\boto_c Y$ a disjoint CIA. Suppose that $\Sigma\not\vdash X\boto_c Y$. We show that $\Sigma\not\models X\boto_c Y$. 
We may assume that $X\boto_c Y$ is minimal in the sense that for all nonempty $X'\subseteq X$ and $Y'\subseteq Y$ such that $X'Y'\neq XY$, we have $\Sigma\vdash X'\boto_c Y'$. If $X\boto_c Y$ is not minimal, we can remove attributes from $X$ and $Y$ until we obtain a minimal atom. It suffices to show the claim for the minimal atom as decomposition $\mathcal{D}_c$ ensures that the claim must hold also for the original atom. Note that by trivial independence $\mathcal{T}_c$ and symmetry $\mathcal{S}_c$, neither $X$ nor $Y$ is empty.
    
    Let $Z=R\setminus XY$ and $\dom(A)=\{0,1,\nullsymb\}$ for all $A\in R$. Define then $\dom^*(A)=\dom(A)$ for all $A\in XY$ and $\dom^*(A)=\{0,\nullsymb\}$ for all $A\in Z$. Let $A_1\in X$, and define $r_1=\{t\in \prod_{A\in R}(\dom^*(A)\setminus\{\nullsymb\})\mid t(A_1)=\sum_{A\in R\setminus A_1}t(A) \mod 2\}$ and $r_2=\{t(\nullsymb/A_1)\mid t\in r_1\}$. Let then $r=(r_1\cup r_2,1)$. 
Now $r\not\models X\boto_c Y$, because the null values can be filled in such a way that the resulting relation is just $r'=(r_1,2)$ and hence, there are tuples $t_1,t_2\in r'$ such $t_1(A_1)=1$, $t_1(A)=0$ for $A\in X\setminus A_1$, and $t_2(A)=0$ for $A\in Y$, but there is no $t\in r'$ such that $t(X)=t_1(X)$ and $t(Y)=t_2(Y)$ because for all $t\in r'$, $t(A_1)=\sum_{A\in R\setminus A_1}t(A) \mod 2$.
Now we show that $r\models\Sigma$. First note that it is possible to fill the null values in $r$ such that the resulting relation is $(\prod_{A\in R}(\dom^*(A)\setminus\{\nullsymb\}),1)$, and therefore all of the disjoint PIAs in $\Sigma$ hold. 
Note then that any way of filling the null values in $r$ (with 0 and 1) results in a relation such that for any $U\subseteq R$, if $U\subseteq Z$, then every attribute in $U$ is constant, and if $U\cap Z=\emptyset$ and $XY\not\subseteq U$,  then $r(U)=(\prod_{A\in U})(\dom(A)\setminus\{\nullsymb\},m)$ for some $m$. This means that showing that all of the disjoint CIAs in $\Sigma$ hold can be done as in the proof of Theorem \ref{thm-complete} in \cite{geiger:1991}. 

\iffullversion
We include the proof for the sake of comprehensiveness. 
Suppose that $V\boto_c W\in\Sigma$. Assume first that $VW\cap XY=\emptyset$. Then $VW\subseteq Z$, so by the definition of $r$, every attribute in $VW$ is constant. Thus clearly, $r\models V\boto_c W$.

     Assume then that $VW\cap XY\neq\emptyset$ and $XY\not\subseteq VW$. We show that $r\models V\boto_c W$. Because every attribute in $Z$ is constant, it suffices to check that $r\models V\setminus Z\boto_c W\setminus Z$. But since $r((VW)\setminus Z)=(\prod_{A\in (VW)\setminus Z}(\dom(A)\setminus\{\nullsymb\}),m)$ for the function $m$ such that $m(t)=\sum_{s((VW)\setminus Z)=t}1(s)$, clearly $r\models V\setminus Z\boto_c W\setminus Z$.

     Assume finally that $XY\subseteq VW$. We show that this results in a contradiction. Denote $V=X'Y'Z'$ and $W=X''Y''Z''$, where $X=X'X''$, $Y=Y'Y''$, and $Z'Z''\subseteq Z$. By the minimality of $X\boto_c Y$ and symmetry $\mathcal{S}_c$, we have $\Sigma\vdash X'\boto_c Y'$ and $\Sigma\vdash Y\boto_c X''$. From $\Sigma\vdash X'Y'Z'\boto_c X''Y''Z''$, we obtain $\Sigma\vdash X'Y'\boto_c X''Y''$ by using decomposition $\mathcal{D}_c$. Then by applying exchange $\mathcal{E}_c$ to $\Sigma\vdash X'\boto_c Y'$ and $\Sigma\vdash X'Y'\boto_c X''Y''$, we obtain $\Sigma\vdash X'\boto_c X''Y'Y''$. Then by symmetry $\mathcal{S}_c$, we obtain $\Sigma\vdash YX''\boto_c X'$. Then by applying exchange $\mathcal{E}_c$ to $\Sigma\vdash Y\boto_c X''$ and $\Sigma\vdash YX''\boto_c X'$, we obtain $\Sigma\vdash Y\boto_c X''X'$. Then by applying symmetry $\mathcal{S}_c$, we have $\Sigma\vdash X\boto_c Y$.
     \fi
\end{proof}
Moreover, since no rule in $(\mathfrak{I}_c\cup\mathfrak{I}_p\cup\mathfrak{J}_{p\& c})\setminus\{\mathcal{C}_c,\mathcal{C}_p\}$ is such that it has a PIA in the antecedent and a CIA in the consequent, in the disjoint case, the  PIAs in $\Sigma$ do not affect whether a  CIA is logically implied or not, as described in the following theorem.
\begin{theorem}\label{poscertthm}
    Let $\Sigma$ be a set of disjoint CIAs and PIAs, and $\sigma$ a disjoint CIA. Then
    $\Sigma\models\sigma$ if and only if
    $\Sigma\setminus\{\tau\in\Sigma\mid \tau\text{ is a PIA}\}\models\sigma$.
\end{theorem}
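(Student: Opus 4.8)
Theorem \ref{poscertthm} asserts that when $\Sigma$ is a set of disjoint CIAs and PIAs and $\sigma$ is a disjoint CIA, the PIAs in $\Sigma$ are irrelevant to whether $\sigma$ is implied; formally, $\Sigma \models \sigma$ iff $\Sigma_c \models \sigma$, where $\Sigma_c = \Sigma \setminus \{\tau \in \Sigma \mid \tau \text{ is a PIA}\}$ is the set of CIAs in $\Sigma$. One direction is trivial: since $\Sigma_c \subseteq \Sigma$, any relation satisfying $\Sigma$ also satisfies $\Sigma_c$, so $\Sigma_c \models \sigma$ immediately gives $\Sigma \models \sigma$. The content is the converse.

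**The plan.** The plan is to prove the converse syntactically, leveraging the sound and complete axiomatisation $\mathfrak{A} := (\mathfrak{I}_c\cup\mathfrak{I}_p\cup\mathfrak{J}_{p\& c})\setminus\{\mathcal{C}_c,\mathcal{C}_p\}$ established in Theorem \ref{thm-completedisjointcertpos_cert} for the disjoint $(\cia\cup\pia,\cia)$-implication problem. By soundness and completeness it suffices to show the purely proof-theoretic statement $\Sigma \vdash_{\mathfrak{A}} \sigma \implies \Sigma_c \vdash_{\mathfrak{A}} \sigma$. First I would fix an arbitrary $\mathfrak{A}$-deduction $(\tau_1,\dots,\tau_n)$ of $\sigma$ from $\Sigma$ and argue by induction along this derivation, establishing the key invariant: \emph{every CIA appearing in the derivation is already $\mathfrak{A}$-derivable from $\Sigma_c$ alone}. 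The crux is the observation flagged in the sentence preceding the theorem --- that no inference rule in $\mathfrak{A}$ has a PIA in its antecedent (premise) and a CIA in its conclusion. I would verify this by direct inspection of the rule set: the certain rules $\mathcal{T}_c,\mathcal{S}_c,\mathcal{D}_c,\mathcal{E}_c$ have only CIA premises and CIA conclusions; the possible rules $\mathcal{T}_p,\mathcal{S}_p,\mathcal{D}_p$ and both mixed-exchange rules $\mathcal{E}_{p\&c},\mathcal{E}_{c\&p}$ all have PIA \emph{conclusions}; and the constancy rules $\mathcal{C}_c,\mathcal{C}_p$, which would be the only candidates producing a CIA from mixed premises, have been explicitly removed from $\mathfrak{A}$.

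**Carrying out the induction.** For the inductive step at $\tau_i$, I would split on whether $\tau_i$ is a CIA or a PIA, but only the CIA case needs the invariant. If $\tau_i \in \Sigma$ is a CIA, it lies in $\Sigma_c$ by definition, so $\Sigma_c \vdash_{\mathfrak{A}} \tau_i$ trivially. If $\tau_i$ is a CIA obtained by a rule, then by the inspection above that rule must be one of $\mathcal{T}_c,\mathcal{S}_c,\mathcal{D}_c,\mathcal{E}_c$, whose premises are all CIAs drawn from earlier steps $\tau_{j}$ with $j<i$; by the induction hypothesis each such premise is $\mathfrak{A}$-derivable from $\Sigma_c$, and reapplying the same certain rule yields $\Sigma_c \vdash_{\mathfrak{A}} \tau_i$. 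Since $\sigma = \tau_n$ is a CIA, the invariant at step $n$ gives $\Sigma_c \vdash_{\mathfrak{A}} \sigma$. Applying soundness and completeness once more converts this back to $\Sigma_c \models \sigma$, completing the converse.

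**Main obstacle.** The only genuine subtlety is making the rule-inspection airtight: I must confirm that after deleting $\mathcal{C}_c$ and $\mathcal{C}_p$, \emph{every} rule of $\mathfrak{A}$ that can output a CIA takes exclusively CIA premises. This is where the deletion of the constancy rules is essential, since $\mathcal{C}_c$ could in principle combine a CIA premise $X\boto_c X$ with an arbitrary premise to produce a CIA conclusion --- and in the unrestricted system nothing forbids the second premise from being a PIA. Once that case is excluded by working in $\mathfrak{A}$ rather than the full rule set, the rule-by-rule check is routine and the induction goes through cleanly. I would also remark that the proof relies on the restriction to disjoint atoms only insofar as Theorem \ref{thm-completedisjointcertpos_cert} guarantees $\mathfrak{A}$ is complete precisely in that setting, so the completeness-soundness bridge is licensed.
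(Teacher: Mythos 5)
Your proof is correct and follows essentially the same route as the paper's: both reduce the question, via the soundness and completeness of $(\mathfrak{I}_c\cup\mathfrak{I}_p\cup\mathfrak{J}_{p\& c})\setminus\{\mathcal{C}_c,\mathcal{C}_p\}$ from Theorem~\ref{thm-completedisjointcertpos_cert}, to the observation that no remaining rule has a PIA premise and a CIA conclusion, the only difference being that the paper routes the final step through $\mathfrak{I}_c$ and Theorem~\ref{thm-completecert} where you invoke soundness of the combined system directly. One small inaccuracy in a side remark: $\mathcal{C}_c$ as defined takes two CIA premises ($X\boto_c X$ and $Y\boto_c Z$), so its deletion is not what makes the rule inspection go through; the constancy rules are dropped because their first premise is non-disjoint, which is orthogonal to your induction.
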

The following example demonstrates that the disjointness assumption in the previous two theorems is crucial.  

\begin{example}
Let $\Sigma=\{A\boto_p A,B\boto_p B,C\boto_p C, A\boto_c C,B\boto_c C\}.$
Then we have $\Sigma\not\vdash_{\mathfrak{I}_c\cup\mathfrak{I}_p\cup\mathfrak{J}_{p\& c}} AB\boto_c C$, but $\Sigma\models AB\boto_c C$. For the first part, note that no rule in $\mathfrak{I}_c\cup\mathfrak{I}_p\cup\mathfrak{J}_{p\& c}$ with a PIA in the antecedent has a CIA in the consequent. Therefore it suffices to show that  $\{A\boto_c C,B\boto_c C\}\not\vdash_{\mathfrak{I}_c} AB\boto_c C$. By Theorem \ref{thm-completecert}, this follows from the fact that $\{A\boto_c C,B\boto_c C\}\not\models AB\boto_c C$, which is witnessed by the relation $r'$ of Table \ref{exchangefig}.


To see that \( \Sigma \models AB \boto_c C \), note that each of \( A \), \( B \), and \( C \) is either constant or contains at most one non-null value. Without loss of generality, assume this value is 0. Since the domain of each attribute contains at least two non-null values, any nulls can be filled so that each column has zeroes except at one position, which we set to 1. If there are no nulls, the column consists entirely of zeroes.
If $\Sigma\models C\boto_c C$, we clearly have $\Sigma\models AB\boto_c C$. Now assume, that $\Sigma\not\models C\boto_c C$. Then $\Sigma\vdash A\boto_c C$ and $\Sigma\not\models C\boto_c C$ imply that the column $A$ is all zeroes at the beginning, i.e., $\Sigma\models A\boto_c A$. (If both $\Sigma\not\models A\boto_c B$ and $\Sigma\not\models C\boto_c C$ hold, then the corresponding columns can be filled such that they have zeroes in every position except one position in each column where the value is one, as described above. Then $\Sigma\vdash A\boto_c C$ cannot hold.) Similarly, $\Sigma\vdash B\boto_c C$ and $\Sigma\not\models C\boto_c C$ imply that the column $B$ is all zeroes at the beginning, i.e., $\Sigma\models B\boto_c B$. But this means that $\Sigma\models AB\boto_c AB$, and hence $\Sigma\models AB\boto_c C$.
\end{example}

\begin{corollary}\label{thm-notcomplete}
The set $\mathfrak{I}_c\cup\mathfrak{I}_p\cup\mathfrak{J}_{p\& c}$ does not form a complete axiomatisation for the $(\cia\cup\pia,\cia)$-implication problem.
\end{corollary}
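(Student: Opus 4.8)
The plan is to establish the corollary directly from the preceding Example, which exhibits a concrete witness to the incompleteness. Specifically, I would observe that the Example furnishes a set $\Sigma=\{A\boto_p A,B\boto_p B,C\boto_p C, A\boto_c C,B\boto_c C\}$ of (non-disjoint) CIAs and PIAs together with a target CIA $\sigma=AB\boto_c C$ for which $\Sigma\models\sigma$ holds but $\Sigma\not\vdash_{\mathfrak{I}_c\cup\mathfrak{I}_p\cup\mathfrak{J}_{p\& c}}\sigma$ fails. Since soundness alone is never in question here, incompleteness is witnessed precisely by the existence of such a pair, so the corollary is an immediate consequence.

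Concretely, I would argue as follows. By the Example, $\Sigma\models AB\boto_c C$, so this is a genuine instance of the $(\cia\cup\pia,\cia)$-implication problem whose answer is "yes." On the other hand, the Example also shows $\Sigma\not\vdash_{\mathfrak{I}_c\cup\mathfrak{I}_p\cup\mathfrak{J}_{p\& c}} AB\boto_c C$, so the axiomatisation fails to derive this valid implication. By the definition of completeness for the $(\mathcal{P},\mathcal{Q})$-implication problem with $\mathcal{P}=\cia\cup\pia$ and $\mathcal{Q}=\cia$, completeness would require that $\Sigma\models\sigma$ entail $\Sigma\vdash\sigma$ for \emph{every} such $\Sigma$ and $\sigma$; the single counterexample of the Example refutes this universally quantified statement. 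Hence $\mathfrak{I}_c\cup\mathfrak{I}_p\cup\mathfrak{J}_{p\& c}$ is not complete for this implication problem.

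The only subtlety to check — and what I would treat as the single point deserving care rather than a genuine obstacle — is that the atoms in the Example must legitimately belong to the classes $\cia\cup\pia$ and $\cia$ named in the corollary. This holds: every atom of $\Sigma$ is either a PIA or a CIA, and $AB\boto_c C$ is a CIA, so the instance falls squarely within the stated implication problem. I would also note explicitly that the witnessing atoms are \emph{non-disjoint} (indeed $A\boto_p A$, $B\boto_p B$, and $C\boto_p C$ each have intersecting sides), which is exactly why this instance escapes the completeness guaranteed by Theorem~\ref{thm-completedisjointcertpos_cert}: that theorem's completeness is restricted to disjoint atoms, and the present counterexample exploits precisely the trivial/self-independence atoms that the disjoint restriction excludes. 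This contrast is worth stating, as it pinpoints why no further inference rule in the given set can repair the gap without strengthening the rule set itself.

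I do not expect any real difficulty here, since the heavy lifting — verifying both $\Sigma\models AB\boto_c C$ and the non-derivability via Theorem~\ref{thm-completecert} applied to $\{A\boto_c C,B\boto_c C\}$ — has already been carried out in the Example. The proof of the corollary therefore amounts to little more than citing the Example and matching it against the formal definition of completeness, so a one- or two-sentence argument suffices.
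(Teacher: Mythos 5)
Your proposal is correct and matches the paper's own treatment: the corollary is stated immediately after the example precisely because it follows by citing that example as a witness ($\Sigma\models AB\boto_c C$ yet $\Sigma\not\vdash_{\mathfrak{I}_c\cup\mathfrak{I}_p\cup\mathfrak{J}_{p\& c}} AB\boto_c C$) against the definition of completeness. Your additional check that the atoms lie in the classes $\cia\cup\pia$ and $\cia$, and your remark on non-disjointness, are accurate but not needed beyond what the example already provides.
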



\section{Computational Complexity}\label{sec:complexity}

This section covers the computational complexity of PIAs and CIAs with respect to some key problems. 
We start our analysis with the combined complexity and the data complexity of model checking and conclude by considering the implication problem. 
In this section we assume that the domains of attributes are finite. The results do not depend on how the multiplicities of the tuples are encoded; that is, they can be written either in unary or in binary.



Let $\calP$ be a class of dependencies. The \emph{combined complexity problem for $\calP$} is to decide, given a relation $r$ over attributes $A_1, \dots ,A_n$, the associated (finite) domains $\dom(A_1), \dots ,\dom(A_n)$, and a dependency $\sigma$ from $\calP$ as the input, whether $r$ satisfies $\sigma$. 
If the input dependency $\sigma$ is fixed, the problem is called the \emph{data complexity problem of $\sigma$}. Given a complexity class $\textbf{C}$, we say that the \emph{data complexity problem for $\calP$} is
\begin{itemize}
    \item in $\textbf{C}$ if for any $\sigma \in \calP$, the data complexity problem of $\sigma$ is in $\textbf{C}$;
    \item $\textbf{C}$-hard if for some $\sigma \in \calP$, the data complexity problem of $\sigma$ is $\textbf{C}$-hard; and
    \item $\textbf{C}$-complete when it is both in $\textbf{C}$ and $\textbf{C}$-hard.
\end{itemize}
First we show that for PIAs already data complexity is $\NP$-complete.

\begin{theorem}
        The combined complexity and data complexity problems for possible independence are both \NP-complete. The \NP-hardness holds for any subclass of PIAs that contains an IA of the form $A \boto_p BC$, where $A,B,C$ are distinct attributes.
\end{theorem}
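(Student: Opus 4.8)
The plan is to prove membership in \NP\ for the combined (hence also the data) complexity, and then \NP-hardness already at the level of data complexity for the single fixed atom $A\boto_p BC$, which transfers to combined complexity since the data complexity problem is a restriction of the combined one.

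For membership, recall that $r\models X\boto_p Y$ holds iff some grounding $r'$ satisfies $r'(XY)\supseteq r'(X)\times r'(Y)$. First I would observe that a satisfying grounding, if it exists, can be taken to introduce few fresh values: the number of distinct values appearing in the $X$-columns (resp.\ the $Y$-columns) never needs to exceed the number of distinct tuples of $r$, because any value beyond those pinned down by the non-null parts of the tuples only enlarges the cross product that must be realised, and can therefore be merged into an existing value consistent with a tuple's fixed entries. This lets me cap every multiplicity at a polynomial bound, so the relevant part of any grounding has polynomial size even when multiplicities are written in binary. The \NP\ procedure then guesses, for each side, a set $S_X$ of $X$-values and a set $S_Y$ of $Y$-values of size at most the number of distinct tuples, and verifies in polynomial time that (i) every tuple of $r$ is consistent on its non-null entries with at least one cell of the grid $S_X\times S_Y$, so that all tuples can be placed inside the grid, and (ii) every cell of $S_X\times S_Y$ can be realised by grounding some tuple-copy, where a tuple may realise a cell only if its fixed entries agree with that cell and may be reused up to its multiplicity. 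Condition (ii) is a bipartite $b$-matching feasibility test between cells and distinct tuples, with capacities given by the multiplicities, decidable in polynomial time by a flow computation; this is what makes the binary encoding of multiplicities harmless.

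For hardness I would fix $\sigma=A\boto_p BC$ with $A,B,C$ distinct and reduce a canonical \NP-complete problem — my first attempt would be (3)SAT, with $3$-colourability as a fallback — to deciding $r\models\sigma$. The design exploits the two degrees of freedom that the covering condition offers: grounding a null encodes a binary choice, and the requirement that the grounded relation contain the full rectangle $S_A\times S_{BC}$ encodes constraint satisfaction. Concretely, I would let complete rows pin down the grid coordinates $S_A$ and $S_{BC}$, use null entries in designated rows to encode variable assignments (consistency across occurrences being forced because each coordinate value is shared by an entire row or column of the rectangle), and build clause gadgets as grid cells that admit a covering tuple exactly when at least one literal is set truthfully. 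The reduction must be tight in the counting sense illustrated by the relation $r$ of Table~\ref{exchangefig}, where $A\boto_p BC$ fails purely because $|r(A)|\cdot|r(BC)|$ exceeds the number of rows: the budget of rows must be just large enough to cover the rectangle precisely when the formula is satisfiable.

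The main obstacle is the hardness gadget, specifically proving both directions and, above all, ruling out spurious groundings: I must ensure that when the instance is a no-instance there is genuinely no way to complete the nulls — including completions that introduce unintended values or exploit multiplicities — so that the rectangle cannot be covered. The membership direction is comparatively routine once the bound on the number of fresh values is established, and that bound (which keeps the certificate polynomial under binary multiplicities) is the only subtle point there. Finally, combining the \NP\ upper bound with \NP-hardness of the fixed atom $A\boto_p BC$ yields \NP-completeness of both the data and the combined complexity problems, with the stated hardness for every subclass containing such an atom.
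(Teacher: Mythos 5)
Your overall strategy coincides with the paper's: \NP-membership by a guess-and-check argument, and \NP-hardness already at the level of data complexity for one fixed small atom, via a reduction from \SAT{} that exploits the tight counting constraint (the grounded relation must contain the full rectangle $r'(X)\times r'(Y)$ within a row budget that barely suffices). Your membership argument is in fact more detailed than the paper's, which simply declares it straightforward; your observation that the number of distinct grounded $X$- and $Y$-values can be capped by the number of distinct tuples, followed by a flow/$b$-matching feasibility check, is a correct way to handle binary-encoded multiplicities, and the two conditions you verify (every tuple fits some cell of the guessed grid; every cell is realisable respecting multiplicities) do characterise satisfaction of the \pia.

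The genuine gap is the hardness reduction, which you yourself flag as ``the main obstacle'': you describe the intended behaviour of the gadget but do not construct it, and the construction is the entire technical content of the theorem. For comparison, the paper reduces \SAT{} to the fixed atom $VP\boto_p C$ (the mirror image of $A\boto_p BC$) as follows. For each variable $p$ it creates a block of $2|W|$ tuples with $C$-value $p$: the two tuples $(\nullsymb,+,p)$ and $(\nullsymb,-,p)$, and $(q,\nullsymb,p)$, $(\overline q,\nullsymb,p)$ for every other variable $q$. Since the rectangle condition forces every block to realise one $(V,P)$-pair per literal and the block has exactly as many rows as there are literals, the two sign-pinned tuples must be grounded to $(p,+),(\overline p,-)$ or $(p,-),(\overline p,+)$, which encodes a truth assignment that the rectangle condition propagates consistently to all blocks. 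Each clause $C_i$ then gets a block with $C$-value $i$ whose row budget is again exactly $2|W|$ and which contains the negated literals $\overline{l_{i,j}}$ with null sign, $n-1$ fully null tuples, filler tuples for unused variables, and exactly one tuple $(\nullsymb,+,i)$; after the forced covers are accounted for, this last tuple can only realise a pair $(l,+)$ for a literal $l$ of the clause, so the block is completable iff some literal of $C_i$ is true. Both correctness directions then rest on this exact cardinality accounting. Until you supply a gadget at this level of precision and verify both directions --- in particular ruling out groundings that realise fewer $(V,P)$-pairs than intended --- the hardness half of your proof is a plan rather than an argument.
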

\begin{proof}
        The membership in $\NP$ is straightforward for combined complexity (and therefore also for data complexity). 
        For the \NP-hardness, it suffices to consider only data complexity.  Letting $\sigma \coloneqq VP \boto_p C$,
        we construct a reduction from the satisfiability problem (\SAT) to the data complexity of $\sigma$. The input of $\SAT$ is a Boolean formula in conjunctive normal form: $\phi=C_1 \land \dots \land C_m$, where $C_i= l_{i,1}\lor \dots \lor l_{i,n}$, $i\in [m]$, are such that each $l_{i,j}$, $j\in [n]$, is a propositional variable $p$ or a negated propositional variable $\overline p$. 
    We construct a relation $r$ from $\phi$ in the following way. Assuming that $W$ is  the set of variables appearing in $\phi$, and representing each tuple $t:(V,P,C)\mapsto (a,b,c)$ simply as $(a,b,c)$, we construct the relation $r$ as follows:
    \begin{enumerate}
        \item\label{it:eka1}  For each $p \in W$, add $(\nullsymb,+,p)$ and $(\nullsymb,-,p)$ to $r$. Furthermore, for all $q\in W\setminus \{p\}$, add $(q,\nullsymb,p)$ and $(\overline{q},\nullsymb,p)$ to $r$.
        \item\label{it:toka2} For each clause $C_i = l_{i,1}\lor \dots  \lor l_{i,n}$, add $(\overline{l_{i,1}},\nullsymb,i),\dots, (\overline{l_{i,n}},\nullsymb,i)$ to $r$. Add also $(\nullsymb,\nullsymb,i)$ with multiplicity $n-1$, and $(\nullsymb,+,i)$ with multiplicity $1$ to $r$. Furthermore, assuming $l_{i,1}, \dots , l_{i,n}$ are literals over variables of some set $W'$, add $(v,\nullsymb,i)$ and $(\overline{v},\nullsymb,i)$ for all $v\in W \setminus W'$.
    \end{enumerate}
    Above it is to be understood that the doubly-nested negation $\overline{\overline{p}}$ of a variable $p$ is $p$ itself. Also the multiplicity of each tuple is assumed to be $1$ unless otherwise stated. The construction of $r$ is illustrated in \Cref{tab:reduction}.
        It suffices to show that $\phi$ is satisfiable if and only if $r\models VP\boto_p C$.

    Suppose first $\phi$ is satisfiable by some variable assignment $s$. 
    We construct a grounding $r'$ of the relation $r$ in the following way. We ensure that each tuple has the form $(l,+,*)$ when $s(l)=1$, and $(l,-,*)$ when $s(l)=0$, where $*$ represents an arbitrary value of the variable $C$. It is easy to see that this is possible whenever $s$ satisfies $\phi$. Moreover, this way the obtained grounding $r'$ satisfies the IA $VP\boto C$.

    For the converse direction, suppose $r$ satisfies $VP \boto_p C$, and let $r'$ be the grounding of $r$ satisfying $VP\boto C$. Let us first consider how the tuples introduced in \Cref{it:eka1} are grounded. For each variable $p$ there are only two possibilities: either the pair $(p,+,v)$ and $(\overline{p},-,v)$, or the pair $(p,-,v)$ and $(\overline{p},+,v)$ must appear in the grounding  $r'$ consistently for all values $v$ of $C$. Thus the grounding of \Cref{it:eka1} represents an assignment $s$ of the variables in $\phi$. Before considering \Cref{it:toka2}, let us first note that the values of pairs $(V,P)$ must be consistent between the tuples obtained from \Cref{it:eka1} and those obtained from \Cref{it:toka2}; this ensured by the IA. It is then easy to see that a grounding for the tuples obtained from \Cref{it:toka2} entails $s$ is a satisfying assignment. In particular, for each clause $C_i$, having one tuple of the form $(\nullsymb,+,i)$ in $r$ ensures that for at least one literal $l$ there is a tuple of the form $(l,+,i)$ in the grounding. This concludes the proof.
    \end{proof}
\begin{table}[h!]
\parbox{.45\linewidth}{
\centering
\scalebox{.9}{
\begin{tabular}{ccccc}
\begin{tabular}{cccc}
&\( V \) & $P$ & \( C \) \\\cline{2-4}
&$\nullsymb$&+&$p_1$\\
&$\nullsymb$&-&$p_1$\\
&$p_2$ & $\nullsymb$ & $p_1$\\
&$\overline{p_2}$ & $\nullsymb$ & $p_1$\\
&$p_3$ & $\nullsymb$ & $p_1$\\
&$\overline{p_3}$ & $\nullsymb$ & $p_1$\\\cdashline{2-4}
&$p_1$ & $\nullsymb$ & $p_2$\\
&$\overline{p_1}$ & $\nullsymb$ & $p_2$\\
&$\nullsymb$&+&$p_2$\\
&$\nullsymb$&-&$p_2$\\
&$p_3$ & $\nullsymb$ & $p_2$\\
&$\overline{p_3}$ & $\nullsymb$ & $p_2$\\\cdashline{2-4}
&$p_1$ & $\nullsymb$ & $p_3$\\
&$\overline{p_1}$ & $\nullsymb$ & $p_3$\\
&$p_2$ & $\nullsymb$ & $p_3$\\
&$\overline{p_2}$ & $\nullsymb$ & $p_3$\\
&$\nullsymb$&+&$p_3$\\
&$\nullsymb$&-&$p_3$\\
\end{tabular}
     & 
     $\cup$
\begin{tabular}{cccc}
&\( V \) & $P$ & \( C \) \\\cline{2-4}
&${p_1}$& $\nullsymb$ & $1$\\
&$\overline{p_1}$& $\nullsymb$ & $1$\\
&$\overline{p_2}$& $\nullsymb$ & $1$\\
&$\overline{p_3}$& $\nullsymb$ & $1$\\
&$\nullsymb$& $\nullsymb$ & $1$\\
&$\nullsymb$& $+$ & $1$\\\cdashline{2-4}
&$\overline{p_1}$& $\nullsymb$ & $2$\\
&${p_2}$& $\nullsymb$ & $2$\\
&$\overline{p_3}$& $\nullsymb$ & $2$\\
&$\nullsymb$& $\nullsymb$ & $2$\\
&$\nullsymb$& $\nullsymb$ & $2$\\
&$\nullsymb$& $+$ & $2$\\\cdashline{2-4}
&${p_1}$& $\nullsymb$ & $3$\\
&$\overline{p_1}$& $\nullsymb$ & $3$\\
&${p_2}$& $\nullsymb$ & $3$\\
&$\overline{p_2}$& $\nullsymb$ & $3$\\
&${p_3}$& $\nullsymb$ & $3$\\
&$\nullsymb$& $+$ & $3$\\
\end{tabular}
\end{tabular}
}
\vspace{3mm}
\caption{The relation $r$ obtained via the reduction from an example SAT instance $\phi=C_1 \land C_2\land C_3$, where $C_1= p_2 \lor  p_3$,
$C_2= p_1 \lor \overline{p_2} \lor p_3$, and
$C_3= \overline{p_3}$.
\label{tab:reduction}}
}
\hfill
\parbox{.45\linewidth}{
    \centering
        \[
\begin{array}[t]{c@{\hspace{1cm}}c}
\begin{array}[t]{cccc}
r=&A & B \\\cline{2-3}
&0 & \nullsymb \\
&1 & \nullsymb \\
&\nullsymb & 2 \\
&0 & \nullsymb \\
&\nullsymb & 1 \\
&\nullsymb & \nullsymb \\
&\nullsymb & 0 \\
\end{array}
&
\begin{array}[t]{ccc}
r_\times =&A & B \\\cline{2-3}
&0 & 0 \\
&1 & 0 \\
&1 & 2 \\
&0 & 1 \\
&1 & 1 \\
&0 & 2 \\
\end{array}
\end{array}
\]
\caption{One possibility for the relation $E$ is to connect tuples on the same row from both tables. }
    \label{fig:rtimes}}
\end{table}
Although the data complexity and model checking problems are generally $\NP$-complete, it can be shown that for \emph{unary} PIAs (i.e., PIAs $X \boto_p Y$ where $|X|=|Y|=1$) model checking lies in polynomial time. As shown next, this is achieved by representing the question as a maximum flow problem. 
Let us start by introducing few necessary definitions. A \emph{network} is a directed graph $G = (V, E)$, where each edge $(u, v) \in E$ is associated with a \emph{capacity} $c(u, v) \geq 0$. The network is called a \emph{flow network} if it is furthermore associated with a source node $s\in V$ and a sink node $t \in V$. A \emph{flow} is a mapping $f:E \to \mathbb{R}_{\geq 0}$ that satisfies:
\begin{enumerate}
    \item \emph{capacity constraint}: $f(u, v) \leq c(u, v)$ for all $(u, v) \in E$,
    \item \emph{flow conservation}: $\sum_{(v,u) \in E} f(v, u) = \sum_{(u,v)\in E} f(u, v)$ for all $u \in V \setminus \{s, t\}$.
\end{enumerate}
The \emph{total flow} out of the source $s$ is defined as 
\(
\sum_{(s,v)\in E} f(s, v).
\)
The \emph{maximal flow problem} is to determine the maximum total flow in a given flow network. In what follows we use the well-known integral flow theorem, which states that maximum total flow is integral if all the edge capacities are integral (see, e.g., \cite{0069809}).

\begin{theorem}
    The data complexity and combined complexity problems for unary possible independence are in polynomial time.
\end{theorem}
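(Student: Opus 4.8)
The plan is to reduce model checking of a unary PIA $A\boto_p B$ to a maximum flow computation. First I would note that only the two columns $A$ and $B$ are relevant, and that a grounding $r'$ satisfies $A\boto B$ exactly when the set of $(A,B)$-pairs occurring in $r'$ equals the full Cartesian product $S_A\times S_B$, where $S_A$ and $S_B$ are the sets of $A$- and $B$-values appearing in $r'$. Thus deciding $r\models A\boto_p B$ amounts to asking whether the nulls in columns $A$ and $B$ can be filled so that the resulting pair-set is a combinatorial rectangle.

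The crucial structural step is to show that we may restrict attention to the smallest such rectangle. Let $S_A^0$ (resp.\ $S_B^0$) be the set of non-null values already present in column $A$ (resp.\ $B$); if this set is empty, replace it by an arbitrary singleton domain value. I would prove that $A\boto_p B$ holds if and only if the nulls can be filled \emph{using only values in $S_A^0$ and $S_B^0$} so that every pair in $S_A^0\times S_B^0$ occurs. The ``if'' direction is immediate. For ``only if'', take any satisfying grounding with value sets $S_A\supseteq S_A^0$, $S_B\supseteq S_B^0$ and prune it: for each value $a\in S_A\setminus S_A^0$, every tuple carrying $a$ must have had a null in column $A$, so I can recolour it to a fixed value of $S_A^0$; the destroyed pairs lie outside $S_A^0\times S_B^0$, while every pair of $S_A^0\times S_B^0$ is still covered by an untouched tuple (its first coordinate being in $S_A^0$), so coverage is preserved. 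Repeating the argument for $B$ yields a grounding whose value sets are exactly $S_A^0$ and $S_B^0$ and whose pair-set is precisely $S_A^0\times S_B^0$. This pruning argument is the main obstacle, since it is exactly what rules out the a~priori possibility that introducing fresh domain values could ever help.

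Having fixed the target rectangle $P=S_A^0\times S_B^0$, the remaining question is purely combinatorial. Each complete tuple covers one fixed pair of $P$ for free; a tuple of the form $(a,\nullsymb)$ can be made to cover any pair $(a,b')$ with $b'\in S_B^0$; a tuple $(\nullsymb,b)$ any pair $(a',b)$; and a tuple $(\nullsymb,\nullsymb)$ any pair of $P$. Since a single grounded tuple-copy contributes exactly one pair, the PIA holds iff the pairs of $P$ not already covered by complete tuples can be matched injectively to compatible flexible tuple-copies. I would encode this as a flow network with a source feeding one node per uncovered pair (capacity $1$), an edge from each pair to every compatible flexible tuple-type (capacity $1$), and an edge from each tuple-type to the sink with capacity equal to its multiplicity; then $r\models A\boto_p B$ iff the maximum flow equals the number of uncovered pairs. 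The integral flow theorem guarantees that an optimal flow of this value is integral, hence corresponds to a genuine grounding.

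Finally I would check polynomiality: $|P|$ is at most the square of the number of distinct tuples, the network has polynomially many nodes and edges, capacities may be capped at $|P|$, and maximum flow is then computable in polynomial time irrespective of whether multiplicities are written in unary or binary. Because a unary PIA is specified by only two attributes, the combined-complexity input is no larger than the data-complexity input, so the same algorithm settles the combined-complexity problem as well. The degenerate cases (an empty relation, or a column with no non-null value) are absorbed by the singleton convention above and dealt with directly.
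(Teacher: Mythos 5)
Your proposal is correct and follows essentially the same route as the paper: reduce satisfaction of a unary PIA to a maximum-flow (bipartite matching) problem between the tuple copies and the pairs of the Cartesian product $S_A^0\times S_B^0$ of the non-null value sets, and invoke the integral flow theorem. The only differences are presentational: you make explicit, via the pruning/recolouring lemma, that fresh domain values never help (the paper gets this implicitly from the observation that any satisfying grounding must already cover $r_\times=S_A^0\times S_B^0$), and you match only the uncovered pairs to the null-containing tuples rather than all pairs to all tuples, which is an equivalent formulation of the same network.
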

    \begin{proof}
        Let $r=(r',m)$ be a relation and $A\boto_p B$ a PIA over single attributes $A$ and $B$. 
        We write 
        $r_\times$ for the Cartesian product of the non-null (set) projections of $r$ on $A$ and $B$; that is, $r_\times$ is the set of tuples $t$ over $\{A,B\}$ such that $t(A)\in r'(A)\setminus \{\nullsymb\}$ and $t(B) \in r'(B)\setminus \{\nullsymb\}$. 
         The flow network is then constructed in the following way. The nodes consist of the tuples in $r'$ and $r_\times$, as well as fresh values $s_0$ and $s_1$ denoting the source and the sink, respectively. The edges and capacity constraints are build as follows:
        \begin{enumerate}
            \item\label{item:yks} For each $t \in r'$ there is an edge $(s_0,t)$ with capacity constraint $f(s_0,t) \leq m(t)$.
            \item\label{item:kaks} For each $t\in r'$ and $t'\in r_\times$ such that $t'$ is a grounding of $t$, there is an edge $(t,t')$ with capacity constraint $f(t,t')\leq 1$
            \item\label{item:kolme} For each $t \in r_\times$ there is an edge $(t,s_1)$ with capacity constraint $f(t,s_1)\leq 1$.
        \end{enumerate}
        We claim that the maximum total flow is $|r_\times|$ if and only if $r \models A\boto_p B$. 
Suppose first the maximum total flow $\sum_{v \in V} f(s, v)$ is $|r_\times|$. Consequently, $f$ assigns each incoming edge $(t,s_1)$ of the sink the maximum capacity: $f(t,s_1)=1$. Due to flow conservation and the the integral flow theorem, for each tuple $t'\in r_\times$ we find exactly one tuple $t\in r'$ such that $f(t,t')=1$. Conversely, due to \Cref{item:yks} and flow conservation, a tuple $t\in r'$ is associated with no more than $m(t)$ many tuples $t'\in r_\times$ such that $f(t,t')=1$.  Since $f(t,t')=1$  implies $t'$ is a grounding of $t$, we obtain $r \models A \boto B$.
        For the converse direction, assume $r \models A \boto B$. Then we find a grounding $r'$ of $r$ satisfying $A \boto B$.
        Consequently, the set $r_\times$, defined from $r$ as above, is a subset of $r'$. It suffices to define a flow function $f$ which has a total flow $|r_\times|$. Note that the maximum total flow cannot be greater than $|r_\times|$ due to the capacity constraints in \Cref{item:kolme}.
        
        Now, it is possible to choose a binary relation $E \subseteq r'\times r_\times$ (see \Cref{fig:rtimes}) such that             (1) for each $t\in r'$ there is at most $m(t)$ distinct $t'\in r_\times$ such that $(t,t')\in E$; and (2) for each $t\in r$ there is exactly one $t'\in r'$ such that $(t',t)\in E$.
        We let $f(t,t')=1$ if $(t,t')\in E$, and $f(t,t')=0$ if $(t,t')\in r'\times r_\times \setminus E$. The remaining values of $f$, for the edges outgoing the source and the edges incoming the sink, are uniquely determined by flow conservation. It follows by construction $f$ satisfies all the capacity constraints. We conclude that the total flow of $f$ is $|r_\times|$, as required.
     \end{proof}   
We have now shown that the data complexity of model checking is $\NP$-complete for the class of PIAs $X \boto Y$ where $X$ or $Y$ contains at least two attributes. When both $X$ and $Y$ are single attributes, we established that the problem is in polynomial time. We conclude this section by showing that for CIAs data complexity is in \FO. For the next proposition, recall that the domain of each attribute has at least two elements.


\begin{proposition}\label{prop:easy}
    $r \models X \boto_c X$ if and only if $r(X)$ is complete and satisfies $X\boto X$.
\end{proposition}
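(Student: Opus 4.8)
The plan is to prove the two implications of the biconditional separately, after first recording an elementary reformulation of the right-hand side. The key observation is that, once $r(X)$ is complete, $r\models X\boto X$ holds exactly when $X$ is \emph{constant} on $r$, i.e.\ all tuples of $r$ agree on $X$: the witness condition in the definition of $\models X\boto X$ asks, for every pair $t_1,t_2\in r$, for a tuple $t$ with $t(X)=t_1(X)$ and $t(X)=t_2(X)$, and since $t(X)$ is a single tuple of values this forces $t_1(X)=t_2(X)$. I would state this up front, since both directions read most cleanly in terms of constancy of the $X$-projection together with its completeness.

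For the direction from right to left, suppose $r(X)$ is complete and $X$ is constant on $r$. A grounding only replaces null symbols, and by hypothesis $r$ carries no nulls on the attributes of $X$; hence the $X$-projection of every grounding $r'$ coincides with that of $r$. Consequently $r'(X)$ is again complete and constant, so $r'\models X\boto X$ by the reformulation. As this holds for \emph{every} grounding, $r\models X\boto_c X$ follows directly from the definition of certain satisfaction.

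For the direction from left to right I would argue by contraposition: assuming $r(X)$ is incomplete or $X$ is non-constant, I construct a single grounding $r'$ in which two tuples disagree on $X$, which refutes $X\boto X$ and hence refutes $X\boto_c X$. If $r(X)$ is complete but non-constant, two tuples of $r$ already differ on $X$, and this difference is preserved by every grounding, so any grounding suffices. The substantive case is when some tuple $t_0$ carries a null on an attribute $A\in X$: here I would ground this occurrence, together with the $A$-value of a second (copy of a) tuple, to two distinct non-null values --- which the standing assumption $|\dom(A)\setminus\{\nullsymb\}|\geq 2$ guarantees is achievable --- so that the resulting grounding contains two tuples differing already in coordinate $A$, hence on $X$.

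The hard part is exactly this last construction, and the subtlety is structural rather than computational: refuting a \emph{certain} atom requires producing \emph{one} grounding that breaks constancy, so I must force two distinct $X$-values within a single grounding. This needs both a second tuple (or a second independently-groundable copy of $t_0$) against which to diverge, and the two-non-null-values assumption to make the divergence at coordinate $A$ unavoidable. The configuration deserving separate care is therefore the degenerate one of a relation consisting of a single tuple of multiplicity one that carries a null on $X$, where no second tuple is available to diverge against; I would handle this boundary case explicitly, as it is the only point where the argument does not reduce to the two generic cases above.
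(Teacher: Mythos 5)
The paper states this proposition without proof, so there is no official argument to compare against; your route --- reformulating $X\boto X$ as ``$r(X)$ is complete and constant'', disposing of the right-to-left direction by noting that groundings do not touch complete columns, and refuting certainty by exhibiting a single non-constant grounding via the standing assumption $|\dom(A)\setminus\{\nullsymb\}|\geq 2$ --- is the natural one, and it is correct in all the generic cases you actually treat.

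The problem is the ``boundary case'' you flag and defer at the end. It is not a configuration that merely needs separate care: under the paper's definitions it is a counterexample to the stated equivalence, so no handling of it can complete the proof. Take $r=(\{t_0\},m)$ with $m(t_0)=1$ and $t_0(A)=\nullsymb$ for some $A\in X$. Every grounding of $r$ is a one-tuple complete relation, and any such relation satisfies $X\boto X$ (the quantifier over pairs $t_1,t_2$ ranges over a single tuple, which serves as its own witness). Hence $r\models X\boto_c X$, while $r(X)$ is not complete, so the right-hand side fails. Your contrapositive argument genuinely requires a second tuple occurrence to diverge against, and here there is none; the divergence you need is achievable only when the relation contains at least two tuple occurrences counting multiplicities. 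The honest conclusion is that the proposition, as literally stated, must either exclude this degenerate relation or be read under an implicit non-degeneracy assumption. Writing ``I would handle this boundary case explicitly'' without observing that it cannot be handled --- because the biconditional is false there --- leaves a real gap exactly where you locate the difficulty.
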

    
\begin{theorem}
   The data complexity problem for IAs and CIAs is in \FO. The combined complexity problem for CIAs is in polynomial time.
\end{theorem}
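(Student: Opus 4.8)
For \ias the claim is immediate: $r\models X\boto Y$ is literally the first-order sentence asserting that every tuple is non-null on $XY$ together with $\forall t_1\forall t_2\exists t\,(t(X)=t_1(X)\wedge t(Y)=t_2(Y))$, so its data complexity is in \FO. All the work is therefore in \cias. The plan is to reduce $r\models X\boto_c Y$ to a single \emph{local} condition on $r$, exploiting that a grounding fills the nulls of each copy of a tuple independently. I claim $r\not\models X\boto_c Y$ holds exactly when there is a \emph{hole}: an $X$-value $a$ and a $Y$-value $b$ (tuples of non-null domain values over $X$ and $Y$) such that (1) no $XY$-complete tuple of $r$ already equals $(a,b)$, (2) some tuple can be grounded to $X$-value $a$ while still avoiding $(a,b)$, and (3) some tuple can be grounded to $Y$-value $b$ while avoiding $(a,b)$. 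The forward direction reads $a=u_1(X)$ and $b=u_2(Y)$ off the two offending tuples of a violating grounding, noting the sources must avoid $(a,b)$ or the hole would be filled; the converse assembles a grounding by sending the two source tuples to $a$ and $b$ and every other tuple away from $(a,b)$, which is possible precisely by (1) and because each null entry has at least two domain choices.

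For data complexity we fix $\sigma=X\boto_c Y$, so $|X|$ and $|Y|$ are constants. Then the hole location $(a,b)$ is a tuple of a fixed number of domain values and can be captured by a fixed block of first-order quantifiers over the (input) domains, while conditions (1)--(3) are first-order over the relation and the domain predicates: ``$t$ is $XY$-complete and equals $(a,b)$'', ``$t$ agrees with $a$ on its non-null $X$-positions'', and ``$t$ can differ from $b$ on $Y$'' are all bounded conjunctions over the fixed attribute set. Hence the negation of the hole condition, namely $r\models X\boto_c Y$, is in \FO. The treatment is uniform across disjoint and non-disjoint atoms: when $X\cap Y\neq\emptyset$, any $a,b$ disagreeing on $X\cap Y$ can never satisfy (1), so a hole appears as soon as two values on $X\cap Y$ are separately realizable, which recovers via \Cref{prop:easy} that $X\cap Y$ must be certainly constant.

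For combined complexity $\sigma$ is part of the input, so $|X|,|Y|$ are unbounded and we cannot quantify over the exponentially many candidates $a,b$. Instead I would iterate over the $O(|r|^2)$ ordered pairs $(t_1,t_2)$ of source tuples: such a pair fixes $a$ on the non-null $X$-positions of $t_1$ and $b$ on the non-null $Y$-positions of $t_2$, leaving the null positions free. Deciding whether those free positions admit a completion with $(a,b)$ avoiding all at most $|r|$ ``covered'' $XY$-complete combinations is a polynomial feasibility test: either the product of the free-position domain sizes exceeds $|r|$, in which case a dodging completion must exist, or that product is at most $|r|$ and the completions can be enumerated and checked directly. Together with the cheap checks for conditions such as ``$t_1$ can avoid $b$ on $Y$'', this decides the hole condition, hence $r\models X\boto_c Y$, in polynomial time. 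Equivalently, one may evaluate the explicit characterisation---certain constancy of $X$ or of $Y$ via \Cref{prop:easy}, else a saturation-plus-coverage condition---using cardinality comparisons like $|r(X)|=\prod_{A\in X}|\dom(A)\setminus\{\nullsymb\}|$ in place of enumerating cross products.

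The main obstacle is proving the hole characterisation correct in full generality. The delicate points are the coupling between the two source tuples when a single tuple is the only realizer of both $a$ and $b$ (so that realizing both would itself fill the cell), partial nulls inside a compound $X$ or $Y$ (a tuple null on some but not all of $X$), and the role of finite domains, where a scarcity of spare values can block the adversary's dodge and so make a \cia hold that would fail over an infinite domain. Pinning down the predicate ``$t$ can be grounded to $a$ while avoiding $(a,b)$'' so the equivalence is both sound and exhaustive is the crux; once that is settled, the \FO packaging and the polynomial-time feasibility test are routine.
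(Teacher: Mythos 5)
Your overall strategy---reduce $r\models X\boto_c Y$ to a first-order-expressible structural condition on $r$ and then argue both the \FO{} and polynomial-time claims from it---is the same as the paper's, but your condition is genuinely different: the paper characterises $r\models X\boto_c Y$ as ``$X$ or $Y$ is certainly constant, or else every complete pair in $r(X)\times r(Y)$ is realised by some tuple of $r$ and $r(XY)$ is closed under grounding,'' whereas you characterise the \emph{failure} of $X\boto_c Y$ by the existence of a ``hole'' $(a,b)$. Unfortunately the hole characterisation is false as stated, and the failure is exactly at the point you yourself flag as ``the crux'': the coupling of the two source tuples. Take $R=\{A,B\}$, $\dom(A)=\dom(B)=\{0,1,\nullsymb\}$, and let $r$ consist of the single tuple $(\nullsymb,\nullsymb)$ with multiplicity $1$. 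Every grounding of $r$ is a one-tuple complete relation and hence satisfies $A\boto B$, so $r\models A\boto_c B$. Yet $(a,b)=(0,1)$ is a hole in your sense: no complete tuple of $r$ equals $(0,1)$; the tuple can be grounded to $(0,0)$, realising $A$-value $0$ while avoiding $(0,1)$; and it can be grounded to $(1,1)$, realising $B$-value $1$ while avoiding $(0,1)$. Conditions (2) and (3) are separate existentials that may be witnessed by the same copy of the same tuple, and then your converse argument (``send the two source tuples to $a$ and $b$'') has no two tuples to send; a single copy cannot simultaneously realise $a$ on $X$ and $b$ on $Y$ without producing the very cell $(a,b)$ it must avoid.

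This is not merely a presentational issue: the equivalence you need is exactly the statement you defer, so the proof is incomplete where it matters. The natural repair is to require that (2) and (3) be witnessed by two \emph{distinct copies} of tuples of $r$ (which is what the forward direction actually delivers, since the two offending tuples $t_1,t_2$ of a violating grounding must be distinct elements of the grounded multiset); with that strengthening the converse construction goes through, since every remaining tuple either is $XY$-complete and differs from $(a,b)$ by condition (1), or has a null in $XY$ that can be grounded to dodge $(a,b)$ because every attribute has at least two non-null domain values. You would then still need to check that the repaired condition remains \FO{} for fixed $X,Y$ (it does---distinctness of copies is expressible using the multiplicity function) and that your pair-enumeration algorithm for combined complexity respects the distinctness constraint. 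As it stands, however, the central equivalence is asserted, shown false by a two-line example, and explicitly left open, so the proposal does not constitute a proof.
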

\begin{proof}
Clearly it suffices to show the claim for CIAs.
We claim that $r \models X \boto_c Y$ iff one of the following three properties hold: 

(i) $r \models X \boto_c X$, \qquad
    (ii) $r \models Y \boto_c Y$,  \qquad
   (iii) both of the following hold:
    \begin{enumerate}
    \item\label{it:eka} for all complete $t\in r(X)$ and $t'\in r(Y)$, there is $t''\in r$ such that $t''(X)=t$ and $t''(Y)=t'$; and
    \item\label{it:toka} any grounding of $r(XY)$ is included in $r(XY)$.
\end{enumerate}
We show first that, assuming $r \not\models X \boto_c X$ and $r \not\models Y \boto_c Y$, the failure of either \Cref{it:eka} or \Cref{it:toka} leads to $r\not\models X \boto_c Y$. Suppose \Cref{it:eka} is not true.
Let $t\in r(X)$ and $t'\in r(Y)$ be two complete tuples such that $t''(X)\neq t$ or $t''(Y)\neq t'$ for any $t''\in r$. Clearly, for each $t''\in r$ we can then define a grounding $s$
 such that $s(X)\neq t$ or $s(Y)\neq t'$. For this, note that the domain of each attribute is at least of size $2$. 
 The obtained grounding of $r$ then does not satisfy $X \boto Y$, which means that $r$ does not satisfy $X \boto_c Y$.

Suppose then \Cref{it:eka} holds but \Cref{it:toka} does not hold. Let $s\notin r(XY)$ be a grounding of a tuple $t\in r(XY)$. \Cref{it:eka} entails that $s(X)\notin r(X)$ or $s(Y)\notin r(Y)$.
By symmetry, we may assume that $s(X)\notin r(X)$. Since $r \not\models Y \boto_c Y$, either $r \not\models Y\cap X \boto_c Y\cap X$ or $r \not\models Y\setminus X \boto_c Y\setminus X$. Since $r \not\models Y\cap X \boto_c Y\cap X$ entails $r \not\models X \boto_c Y$ by the decomposition and symmetry rules of certain independence, we may assume that $r \not\models Y\setminus X \boto_c Y\setminus X$. We then construct a grounding $r'$ of $r$ with the following properties: 
\begin{itemize}
    \item\label{it:1} One occurrence of the tuple $t$ in $r$ is grounded to $s'$ such that $s'(X)=s(X)$.
    \item\label{it:2} The remaining tuples in $r$ are grounded to tuples $s''$ such that $s''(X)\neq s(X)$.
    \item\label{it:3} The grounding with respect to $Y\setminus X$ is such that $r'\not\models Y\setminus X \boto Y\setminus X$.
\end{itemize}
It is not difficult to see that $r' \not\models X \boto Y$, as required.
For the converse direction, suppose one of the three listed properties hold. In the case of $r \models X \boto_c X$ or $r \models Y \boto_c Y$, we obtain $r \models X \boto_c Y$ by the constancy, symmetry, and trivial independence rules of certain independence. Furthermore, $r \models X \boto_c Y$ is a straightforward consequence of \Cref{it:eka,it:toka}. This concludes the proof of the claim.

A relation $r$ over an ordered list of $n$ attributes can be interpreted as a first-order structure $\frak{A}_r=(U,f, D_1, \dots ,D_n,0)$, where $U\coloneqq D_1\cup \dots \cup D_n \cup \{0, \dots ,m\}$,
$f :U^n \to \{0, \dots, m\}$ is a function representing $r$, and $D_i$, for $i\in [n]$, is the domain of the $i$th attribute in $r$. The attributes of $XY$ furthermore are assumed to occupy fixed positions in the ordered list of attributes. Then, using \Cref{prop:easy} and the claim proven above, it is straightforward to write a first-order formula $\phi$ such that $\frak{A}_r\models \phi$ iff $r \models X \boto_c Y$. We conclude that the data complexity problem for CIAs is in \FO. Regarding combined complexity, the items of the claim can be checked in polynomial time in the combined size of the relation and the CIA.
\end{proof}
From the results of Section \ref{axiomatisations}, we immediately obtain the following theorem concerning the complexity of implication problems, where the problem is to decide, given a finite set $\Sigma\cup\{\tau\}$ of atoms, whether $\Sigma\models\tau$.

\begin{theorem}\label{impl_comp}
The implication problems of Theorems \ref{thm-completecert}, \ref{thm-completepos}, and \ref{thm-completedisjointcertpos_cert} are in polynomial time.
\end{theorem}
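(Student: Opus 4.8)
The plan is to exploit the sound and complete axiomatisations established in Theorems~\ref{thm-completecert}, \ref{thm-completepos}, and \ref{thm-completedisjointcertpos_cert}: in each case the semantic relation $\Sigma \models \sigma$ coincides with the syntactic relation $\Sigma \vdash \sigma$, so it suffices to show that the corresponding \emph{derivability} problem is decidable in polynomial time. The generic strategy is a closure computation of $\cl_{\mathfrak{A}}(\Sigma)$ followed by a membership test for $\sigma$; the real work is to argue that in each of the three rule systems enough of this closure to settle membership of the specific $\sigma$ can be produced in polynomial time, rather than incurring the naive exponential bound coming from the $2^{|R|}\times 2^{|R|}$ possible atoms.

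For \cias (Theorem~\ref{thm-completecert}) I would not reason about $\mathfrak{I}_c$ directly, but instead invoke Theorem~\ref{equiv}, which reduces $\Sigma \models \sigma$ to the \ia-implication instance $ind(\Sigma) \models ind(\sigma)$ over complete relations. This transformation is computable in linear time, so it remains to appeal to the classical fact that implication for marginal independence atoms is decidable in polynomial time~\cite{geiger:1991}; the underlying algorithm settles $ind(\Sigma)\models ind(\sigma)$ by a polynomially bounded fixpoint rather than by enumerating the full closure under $\mathfrak{I}$.

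For the restricted \pia-problem (Theorem~\ref{thm-completepos}) the decisive simplification is that $\mathfrak{I}_p$ contains \emph{no} exchange rule: the only rules are trivial independence, symmetry, constancy, and decomposition. Decomposition only shrinks the two sides of an atom, and constancy only adjoins attributes already derivably constant, so no rule ever enlarges the genuinely independent parts beyond what already occurs in $\Sigma$. Concretely, I would first compute the set $K=\{A \mid \Sigma \vdash A\boto_p A\}$ of constant attributes by a monotone fixpoint (polynomial in $|\Sigma|$ and $|R|$), and then decide $\Sigma \vdash X\boto_p Y$ by testing whether some atom of $\Sigma$, read modulo symmetry and augmented by $K$ via $\mathcal{C}_p$, decomposes to $X\boto_p Y$. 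Each such test is polynomial, giving the desired bound.

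For the combined disjoint problem with a \cia target (Theorem~\ref{thm-completedisjointcertpos_cert}) I would first apply Theorem~\ref{poscertthm}: since no rule of the system carries a \pia premise to a \cia conclusion, the \pias in $\Sigma$ are irrelevant to deriving a disjoint \cia, so the problem reduces to $\Sigma' \models \sigma$, where $\Sigma'$ collects exactly the \cias of $\Sigma$. This is again a \cia-implication instance, dispatched as in the first case through Theorem~\ref{equiv} and the polynomial \ia-algorithm. The main obstacle throughout is thus precisely the polynomial-time decidability of raw \ia-implication on which all three cases ultimately rest: one must either cite the known algorithm of~\cite{geiger:1991} or, for self-containment, verify that the closure under $\mathfrak{I}$ admits a polynomially bounded representation (via the constant-attribute fixpoint together with a combinatorial characterisation of which disjoint atoms are forced) and then check that the linear-time reductions above do not inflate the instance size.
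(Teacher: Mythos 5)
Your proposal is correct and takes essentially the same approach as the paper: the \cia case and the combined disjoint case are reduced via Theorems~\ref{equiv} and~\ref{poscertthm} to the cubic-time \ia-implication algorithm of~\cite{geiger:1991}, and the restricted \pia case is decided by exactly the paper's procedure of computing the constant attributes, stripping them from the target, and testing containment in some atom of $\Sigma$ modulo symmetry. No gaps.
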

    
    

\section{Conclusion and Future Work}\label{sec:conclusion}

We have initiated work on the general concept of independence in the presence of incomplete information. We have shown that results on the implications problem well-known from the idealised special case of complete relations are captured by the concept of certain independence in the general realistic case of incomplete relations. In addition, we have shown that the concept of possible independence is challenging in the sense of establishing general results on their implication problem, and intractable in terms of their model checking problem. 
Directions for future work include the axiomatisability and computational complexity for more general fragments of possible (and certain) independence, the discovery of PIAs and CIAs  from given incomplete relations, and the existence/construction of Armstrong relations. More generally, it is interesting to look at alternative approaches to capturing incomplete information and probabilistic variants of PIAs and CIAs.

\begin{credits}
\subsubsection{\ackname} 
M. Hannula: Partially supported by the European Research Council (ERC) under the European Union’s Horizon 2020 research and innovation programme (grant agreement No 101020762).
M. Hirvonen: Received funding from the European Research Council (ERC) under the European Union’s Horizon 2020 research and innovation programme (grant agreement No 101020762) and from the Magnus Ehrnrooth foundation.
\subsubsection{\discintname}
The authors have no competing interests to declare that are
relevant to the content of this article. 
\end{credits}

\bibliographystyle{splncs04}
\bibliography{biblio}

\begin{thebibliography}{10}
\providecommand{\url}[1]{\texttt{#1}}
\providecommand{\urlprefix}{URL }
\providecommand{\doi}[1]{https://doi.org/#1}

\bibitem{0069809}
Ahuja, R.K., Magnanti, T.L., Orlin, J.B.: Network flows - theory, algorithms and applications. Prentice Hall (1993)

\bibitem{DBLP:journals/tods/Codd79}
Codd, E.F.: Extending the database relational model to capture more meaning. {ACM} Trans. Database Syst.  \textbf{4}(4),  397--434 (1979)

\bibitem{DBLP:journals/sigmod/Codd86}
Codd, E.F.: Missing information (applicable and inapplicable) in relational databases. {SIGMOD} Rec.  \textbf{15}(4),  53--78 (1986)

\bibitem{DBLP:conf/bncod/Date82}
Date, C.J.: Null values in database management. In: Proceedings of the Second British National Conference on Databases (BNCOD-2), University of Bristol, UK, 7-9 July 1982. pp. 147--166 (1982)

\bibitem{DBLP:books/daglib/0006733}
Elmasri, R., Navathe, S.B.: Fundamentals of Database Systems, 3rd Edition. Addison-Wesley-Longman (2000)

\bibitem{Fagin82}
Fagin, R.: Horn clauses and database dependencies. J. {ACM}  \textbf{29}(4),  952--985 (1982). \doi{10.1145/322344.322347}, \url{http://doi.acm.org/10.1145/322344.322347}

\bibitem{geiger:1991}
Geiger, D., Paz, A., Pearl, J.: Axioms and algorithms for inferences involving probabilistic independence. Information and Computation  \textbf{91}(1),  128--141 (1991)

\bibitem{DBLP:journals/sigmod/Grant08}
Grant, J.: Null values in {SQL}. {SIGMOD} Rec.  \textbf{37}(3),  23--25 (2008)

\bibitem{herrmann95}
Herrmann, C.: On the undecidability of implications between embedded multivalued database dependencies. Information and Computation  \textbf{122}(2),  221 -- 235 (1995)

\bibitem{DBLP:journals/jacm/ImielinskiL84}
Imielinski, T., Jr., W.L.: Incomplete information in relational databases. J. {ACM}  \textbf{31}(4),  761--791 (1984)

\bibitem{DBLP:journals/vldb/KohlerLLZ16}
K{\"{o}}hler, H., Leck, U., Link, S., Zhou, X.: Possible and certain keys for {SQL}. {VLDB} J.  \textbf{25}(4),  571--596 (2016)

\bibitem{DBLP:conf/wollic/KontinenLV13}
Kontinen, J., Link, S., V{\"{a}}{\"{a}}n{\"{a}}nen, J.A.: Independence in database relations. In: Logic, Language, Information, and Computation - 20th International Workshop, WoLLIC 2013, Darmstadt, Germany, August 20-23, 2013. Proceedings. pp. 179--193 (2013)

\bibitem{DBLP:journals/jcss/LeindersB07}
Leinders, D., den Bussche, J.V.: On the complexity of division and set joins in the relational algebra. J. Comput. Syst. Sci.  \textbf{73}(4),  538--549 (2007)

\bibitem{DBLP:journals/tit/Li23}
Li, C.T.: Undecidability of network coding, conditional information inequalities, and conditional independence implication. {IEEE} Trans. Inf. Theory  \textbf{69}(6),  3493--3510 (2023)

\bibitem{DBLP:conf/kr/Libkin14}
Libkin, L.: Certain answers as objects and knowledge. In: Principles of Knowledge Representation and Reasoning: Proceedings of the Fourteenth International Conference, {KR} 2014, Vienna, Austria, July 20-24, 2014 (2014)

\bibitem{DBLP:journals/tods/Libkin16}
Libkin, L.: Sql's three-valued logic and certain answers. {ACM} Trans. Database Syst.  \textbf{41}(1),  1:1--1:28 (2016)

\bibitem{DBLP:journals/jacm/Lien82}
Lien, Y.E.: On the equivalence of database models. J. {ACM}  \textbf{29}(2),  333--362 (1982)

\bibitem{little2002statistical}
Little, R., Rubin, D.: Statistical analysis with missing data. Wiley series in probability and mathematical statistics. Probability and mathematical statistics, Wiley (2002)

\bibitem{DBLP:journals/jcss/Paredaens80}
Paredaens, J.: The interaction of integrity constraints in an information system. J. Comput. Syst. Sci.  \textbf{20}(3),  310--329 (1980)

\bibitem{DBLP:books/daglib/0011128}
Ramakrishnan, R., Gehrke, J.: Database management systems {(3.} ed.). McGraw-Hill (2003)

\bibitem{DBLP:journals/pvldb/ToussaintGLS22}
Toussaint, E., Guagliardo, P., Libkin, L., Sequeda, J.: Troubles with nulls, views from the users. Proc. {VLDB} Endow.  \textbf{15}(11),  2613--2625 (2022)

\bibitem{DBLP:journals/jcss/Zaniolo84}
Zaniolo, C.: Database relations with null values. J. Comput. Syst. Sci.  \textbf{28}(1),  142--166 (1984)

\end{thebibliography}

\newpage
\setcounter{page}{1}
\appendix

\section{Additional Proofs}\label{appendix}

\newtheorem{innercustomthm}{Theorem}
\newenvironment{customthm}[1]
  {\renewcommand\theinnercustomthm{#1}\innercustomthm}
  {\endinnercustomthm}

\begin{customthm}{\ref{thm-completecert}.}
The set $\mathfrak{I}_c$ forms a sound and complete axiomatisation for the implication problem for CIAs.
\end{customthm}
\begin{proof}
    The soundness of the axiomatisation is clear, as it is easy to check that all of the inference rules in $\mathfrak{I}_c$ are sound.
    We show that the axiomatisation is also complete. Let $\Sigma\cup\{\sigma\}$ be a set of \cias. Suppose that $\Sigma\not\vdash_{\mathfrak{I}_c} \sigma$. We show that $\Sigma\not\models \sigma$. 
    It is clear that $\Sigma\vdash_{\mathfrak{I}_c}\sigma$ if and only if $ind(\Sigma)\vdash_{\mathfrak{I}}ind(\sigma)$, because rules of $\mathfrak{I}_c$ and $\mathfrak{I}$ are the same, except that in $\mathfrak{I}_c$, we have CIAs instead of IAs. Thus, $ind(\Sigma)\not\vdash_{\mathfrak{I}}ind(\sigma)$, and by the completeness of the axiomatisation $\mathfrak{I}$, we have $ind(\Sigma)\not\models ind(\sigma)$. This means that there is a relation $r$ over the relation schema $R=\{A\in\att\mid A \text{ appears in some atom of } \Sigma\cup\{\sigma\}\}$ such that $r\models ind(\Sigma)$, but $r\not\models  ind(\sigma)$. Moreover, since $ind(\Sigma\cup \{\sigma\})$ is a set of IAs, the relation $r$ must be complete. This means that there are no null symbols in $r$, and therefore $r\models ind(\Sigma)$ implies $r\models \Sigma$, and $r\not\models  ind(\sigma)$ implies $r\not\models \sigma$. Hence, the relation $r$ witnesses that $\Sigma\not\models \sigma$, and the axiomatisation $\mathfrak{I}_c$ is complete.
\end{proof}

\begin{customthm}{\ref{equiv}.}
    Let $\Sigma\cup\{\sigma\}$ be a set of CIAs.  Then
    $\Sigma\models\sigma \iff ind(\Sigma)\models ind(\sigma).$
\end{customthm}
\begin{proof}
    As stated in the proof of Theorem \ref{thm-completecert}, it is clear that $\Sigma\vdash_{\mathfrak{I}_c}\sigma$ if and only if $ind(\Sigma)\vdash_{\mathfrak{I}}ind(\sigma)$ 
    The claim then follows from Theorems \ref{thm-complete} and \ref{thm-completecert}.
\end{proof}

\begin{customthm}{\ref{armstr}.}
    The class of CIAs enjoys Armstrong relations.
\end{customthm}
\begin{proof}
    Let $R$ be a relation schema and $\Sigma$ a set of CIAs over $R$. Then there exists a complete relation $r$ such that for all CIAs $\sigma$ over $R$, we have
    \[
    \begin{array}{l}
   \hspace{-.2cm} \Sigma\models\sigma  \iff\sigma\in\cl_{\mathfrak{I}_c}(\Sigma)
    \iff ind(\sigma)\in\cl_{\mathfrak{I}}(ind(\Sigma))\\
    \hspace{-.2cm} \iff ind(\Sigma)\models ind(\sigma)
    \iff r\models ind(\sigma)
    \iff r\models \sigma.
    \end{array}
    \]
    The first equivalence follows from Theorem \ref{thm-completecert}, and the second one from the fact that $\Sigma\vdash_{\mathfrak{I}_c}\sigma$ iff $ind(\Sigma)\vdash_{\mathfrak{I}}ind(\sigma)$. The third equivalence follows from Theorem \ref{thm-complete}, and the fourth one from the definition of Armstrong relations for IAs. Note that the Armstrong relation $r$ for $ind(\Sigma)$ exists by Theorem \ref{thm-armstrong-rel}. The last equivalence follows from the completeness of $r$.
\end{proof}
\begin{customthm}{\ref{thm-completepos}.}
The set $\mathfrak{I}_p$ 
forms a sound and complete axiomatisation for the $(\pia,\pia^*)$-implication problem, where $\pia^*$ is the class of \pias $X\boto_p Y$ such that at least one of the following conditions hold:

\centering
    (i) $|X|=1$ or $|Y|=1$, \qquad or \qquad (ii) $||X|-|Y||\leq 1$.
\end{customthm}
\begin{proof}
    The soundness of the axiomatisation is again easy to show by checking that all of the inference rules are sound.
    We show that the axiomatisation is complete. Let $\Sigma$ be a set of PIAs and $X\boto_p Y$ a PIA. Suppose that $\Sigma\not\vdash X\boto_p Y$. We show that $\Sigma\not\models X\boto_p Y$. 

    We first consider the case of the constancy atom, so assume that $X=Y=B$. Let $\dom(A)=\{0,1,\nullsymb\}$ for all $A\in R$, and define $\dom^*(B)=\dom(B)=\{0,1,\nullsymb\}$ and $\dom^*(A')=\{0,\nullsymb\}$ for all $A'\in R\setminus B$. Define then $r'=\prod_{A\in R}(\dom^*(A)\setminus\{\nullsymb\})$, and $r=(r',1)$. Now clearly $r\not\models B\boto_p B$. We show that $r\models \Sigma$. Let $V\boto_p W\in\Sigma$. Now $B\not\in V\cap W$, because otherwise $\Sigma\vdash B\boto_p B$ by decomposition $\mathcal{D}_p$. This means that either $V\subseteq R\setminus B$ or $W\subseteq R\setminus B$, i.e., either all the columns in $V$ or all the columns in $W$ are constant in $r$. Then $r\models V\boto_p W$ by the soundness of trivial independence $\mathcal{T}_p$, symmetry $\mathcal{S}_p$ and constancy $\mathcal{C}_p$.
    
   We may then assume that $X\cap Y\neq \emptyset$. Otherwise for any $A\in X\cap Y$ either $\Sigma\not\vdash A\boto_p A$ or $\Sigma\vdash A\boto_p A$. In the first case, the construction from the case where $X=Y=B$ witnesses that $\Sigma\not\models A\boto_c A$. Hence, (by the decomposition $\mathcal{D}_p$) also $\Sigma\not\models X\boto_p Y$. In the second case, $\Sigma\vdash A\boto_c A$, we can use constancy $\mathcal{C}_p$ to obtain $\Sigma\not\vdash X\setminus\{A\}\boto_p Y\setminus\{A\}$, and it suffices to show the claim for this atom. Note that by the latter argument, we may also assume that there are no $A\in XY$ such that $\Sigma\vdash A\boto_p A$.

The case that $X\boto_p Y$ is disjoint can be handled in two parts. The first case that $|X|-|Y|\leq 1$ and $|Y|\geq 2$. The second case is that $|Y|=1$. Note that by symmetry we may assume that $|X|\geq |Y|$, so it suffices to consider these two cases.

    For the first case, let $|X|=k\geq m=|Y|$, and assume that $m\geq 2$. Let $\dom(A)=\{0,1,\nullsymb\}$ for all $A\in R$. Construct then a relation $r$ over $R$ as follows. The relation $r$ consists of $2^{k}(2^m-1)-1$ rows. On the first $2^{k}$ rows, let the values of $X$ be the tuples from $\{0,1\}^{k}$. For the first $2^m-1$ rows of these $2^{k}$ rows, let the values of $Y$ be the tuples from $\{0,1\}^m\setminus \{(1,1,\dots,1)\}$ and for the rest of the $2^{k}-(2^m-1)$ rows, let the values of $Y$ be null symbols. Then add rows with only null symbols such that you obtain $2^{k}(2^m-1)-1$ rows in total. Let all the values for the attributes $A\in R\setminus XY$ be constant 0.
    
    We show that $r\models\Sigma$, but $r\not\models X\boto_p Y$. First note that $r(X)$ has $2^{k}$ different complete tuples, and $r(Y)$ has $2^m-1$ different complete tuples. 
    Therefore, it is impossible for $X\boto_p Y$ to hold in the relation $r$, because $r$ has only $2^{k}(2^m-1)-1$ tuples.
    
    Suppose then that $V\boto_p W \in\Sigma$. We will show that $r\models V\boto_p W$. Since every attribute $A\in R\setminus XY$ is constant, we may assume that $VW\subseteq XY$, and therefore also $V\cap W=\emptyset$. (For the latter, recall that there are no $A\in XY$ such that $\Sigma\vdash A\boto_p A$.) Moreover, we may assume that $VW=XY$, because by decomposition $\mathcal{D}_p$, $r\models V\boto_p W$ implies $r\models V'\boto_p W'$ for every $V'\subseteq V$ and $W'\subseteq W$. 
    
    We may assume that $|V|\geq|W|$. Suppose first that $|V|\geq k+1$. This means that $r(V)$ has at most $2^{k}$ different non-null tuples and $r(W)$ has at most $2^{k+m-|V|}\leq 2^{m-1}$ different non-null tuples. Then in order to $V\boto_p W$ hold, we need to fit at most $2^{k}\cdot 2^{m-1}$ values of $XY$ in the relation. Since we have $2^{k}(2^m-1)-1(>2^{k}\cdot 2^{m-1})$ rows in $r$, and no non-null tuple is repeated in $r$, we have enough room to contain all the needed rows. This can be done by a grounding $r'$ of $r(XY)$ that replaces all the null symbols on the first $2^k$ rows with 0s, and replaces the null symbol rows with tuples from the Cartesian product of the projections of these first $2^k$ rows on $V$ and $W$. Since replacing all the null symbols on the first $2^k$ rows with 0s preserves the amounts of different non-null tuples, the cardinality approximations above hold for $r'$. Thus $r\models V\boto_p W$.

    Suppose then that $|V|=k$. Since $|V|+|W|=|VW|=|XY|=|X|+|Y|=k+m$, we have $|W|=m$. Since $V\neq X$ and $W\neq X$, there are $A\in V$ and $B\in W$ such that $A,B\in Y$. As the relation $r$ does not contain the tuple where every attribute of $Y$ has value 1, it must be that by picking suitable values for the null symbols on the first $2^k$ row, we obtain at most $2^{k}-1$ different values for $V$ and $r(W)$ has at most $2^m-1$ different values for $W$. Since no non-null tuple is repeated in $r$, we have enough room to contain all the needed rows. This can be done by a grounding $r'$ of $r(XY)$ that replaces the null symbols on the first $2^k$ rows such that we obtain at most $2^{k}-1$ and $2^{m}-1$ different values for $V$ and $W$, respectively. The grounding $r'$ replaces the null symbol rows with tuples from the Cartesian product of the projections of these first $2^k$ rows on $V$ and $W$, as before. Thus $r\models V\boto_p W$.

    Note that it cannot be that $|V|\leq k-1$. Otherwise, from $|X|-|Y|\leq 1$ and $|X|=k$, it follows that  $k+(k-1)\leq|X|+|Y|=|XY|=|VW|=|V|+|W|\leq k-1+|W|$, i.e., $|W|\geq k$, which is impossible because $|W|\leq|V|\leq k-1$. This finishes the proof in the case that $m\geq 2$.

    We now consider the second case where $|X|= k\geq 1=|Y|$. Let again $\dom(A)=\{0,1,\nullsymb\}$ for all $A\in R$. Construct then $r'$ as follows. The relation $r'$ consists of $2^{k+1}-1$ rows. On the first $2^k$ rows, let the values of $X$ be the tuples from $\{0,1\}^k$. For the first two rows of these $2^k$ rows, let the values of $Y$ be 0 (first row) and 1 (second row). For the rest of the $2^k-2$ rows, let the values of $Y$ be null symbols. Then add rows with only null symbols such that you obtain $2^{k+1}-1$ rows in total. Let all the values for the attributes $A\in R\setminus XY$ be constant 0.

    Now $r\not\models X\boto_p Y$ by an argument similar to the case where $|X|=k\geq m=|Y|$ and $m\geq 2$. Suppose then that $V\boto_p W \in\Sigma$. We will show that $r\models V\boto_p W$. As before, we may assume that $VW=XY$. The proof in the case $|V|\geq k+1$ now follows from the trivial independence $\mathcal{T}_p$, because $|V|\geq k+1$ implies that $W=\emptyset$.

    Suppose then that $|V|=k$ and $|W|=1$. If $k=1$, then $V\boto_p W$ is either $X\boto_p Y$ or $Y\boto_p X$. This is impossible, because $\Sigma\not\vdash X\boto_p Y$. Hence, we may assume that $k>1$. 

    Suppose then that $Y\subseteq V$, and $|V|=l+1>1$ and $|W|=j>0$, where $l+j=k$. Since $Y\subseteq V$ and the only possible non-null values for $Y$ are 0 and  1, by picking the value 0 for the null symbols of the column $Y$ on the first $2^k$ rows, we obtain $2^l+1$ values for $V$. We have at most $2^j$ possible non-null values for $W$. Because $l+j=k$, $j\leq k-1$, and $k> 1$, we have $(2^l+1)2^j=2^k+2^j\leq2^k+2^{k-1}=3\cdot2^{k-1}< 4\cdot2^{k-1}-1=2^{k+1}-1$. Since no non-null tuple is repeated in $r$, we have enough room to contain all the needed rows. This be done by a grounding $r'$ of $r(XY)$ that replaces the null symbols of the column $Y$ on the first $2^k$ rows with 0s, and replaces the null symbol rows with tuples from the Cartesian product of the projections of these first $2^k$ rows on $V$ and $W$, as before.
\end{proof}
\begin{figure}
    \centering
    \begin{subfigure}
       \centering
           \begin{tabular}{c c c c c c c c|c}
       $X_1$ & $X_2$ & $X_3$ & $Y_1$ & $Y_2$ & $Z_1$ & $\dots$ & $Z_n$ & $\#$   \\
       \hline
       0 & 0 & 0 & 0 & 0 & 0 & $\dots$ & 0 & 1   \\
       0 & 0 & 1 & 0 & 1 & 0 & $\dots$ & 0 & 1   \\
       0 & 1 & 0 & 1 & 0 & 0 & $\dots$ & 0 & 1   \\
       0 & 1 & 1 & $\nullsymb$ & $\nullsymb$ & 0 & $\dots$ & 0 & 1   \\
       1 & 0 & 0 & $\nullsymb$ & $\nullsymb$ & 0 & $\dots$ & 0 & 1   \\
       1 & 0 & 1 & $\nullsymb$ & $\nullsymb$ & 0 & $\dots$ & 0 & 1   \\
       1 & 1 & 0 & $\nullsymb$ & $\nullsymb$ & 0 & $\dots$ & 0 & 1   \\
       1 & 1 & 1 & $\nullsymb$ & $\nullsymb$ & 0 & $\dots$ & 0 & 1   \\
       $\nullsymb$ & $\nullsymb$ & $\nullsymb$ & $\nullsymb$ & $\nullsymb$ & 0 & $\dots$ & 0 & 15  \\
    \end{tabular}
 \qquad
\centering
       \begin{tabular}{ c c c c c c | c }
 $X_1$ & $X_2$ & $Y_1$ & $Z_1$ & $\dots$ & $Z_n$ &  $\#$   \\  
 \hline
 0 & 0 & 0 & 0 & $\dots$ & 0 & 1  \\
 0 & 1 & 1 & 0 & $\dots$ & 0 & 1 \\
 1 & 0 & $\nullsymb$ & 0 & $\dots$ & 0 & 1\\
 1 & 1 &  $\nullsymb$  & 0 & $\dots$ & 0 & 1\\
 $\nullsymb$ & $\nullsymb$ & $\nullsymb$ & 0 & $\dots$ & 0 & 3\\
 \end{tabular}
    \end{subfigure}
    \caption{
    The relation $r$ from the proof of Theorem \ref{thm-completepos} in the case $k=3$, $m=2$, where $X=\{X_1,X_2,X_3\}$, $Y=\{Y_1,Y_2\}=$ and $R\setminus XY=\{Z_1,\dots,Z_n\}$, and in the case $k=2$, $m=1$, where $X=\{X_1,X_2\}$, $Y=\{Y_1\}=$ and $R\setminus XY=\{Z_1,\dots,Z_n\}$.
}
    \label{fig:possibleind_app}
\end{figure}
\begin{customthm}{\ref{poscertthm}.}
    Let $\Sigma$ be a set of disjoint CIAs and PIAs, and $\sigma$ a disjoint CIA. Then
    $\Sigma\models\sigma$ if and only if
    $\Sigma\setminus\{\tau\in\Sigma\mid \tau\text{ is a PIA}\}\models\sigma$.
\end{customthm}

\begin{proof}
    Let $\Sigma$ and $\sigma$ be as in the above theorem. By Theorem \ref{thm-completedisjointcertpos_cert}, $\Sigma\models\sigma$ if and only if $\Sigma\vdash_{(\mathfrak{I}_c\cup\mathfrak{I}_p\cup\mathfrak{J}_{p\& c})\setminus\{\mathcal{C}_c,\mathcal{C}_p\}}\sigma$. Since no rule in $(\mathfrak{I}_c\cup\mathfrak{I}_p\cup\mathfrak{J}_{p\& c})\setminus\{\mathcal{C}_c,\mathcal{C}_p\}$ is such that it has a PIA in the antecedent and a CIA in the consequent, we have $\Sigma\vdash_{(\mathfrak{I}_c\cup\mathfrak{I}_p\cup\mathfrak{J}_{p\& c})\setminus\{\mathcal{C}_c,\mathcal{C}_p\}}\sigma$ if and only if $\Sigma\vdash_{\mathfrak{I}_c\setminus\{\mathcal{C}_c\}}\sigma$. The latter is clearly equivalent with $\Sigma\setminus\{\tau\in\Sigma\mid \tau\text{ is a PIA}\}\vdash_{\mathfrak{I}_c\setminus\{\mathcal{C}_c\}}\sigma$. Since all the atoms considered are disjoint, it does not matter whether we consider the set $\mathfrak{I}_c\setminus\{\mathcal{C}_c\}$ or $\mathfrak{I}_c$. Then by Theorem \ref{thm-completecert}, $\Sigma\setminus\{\tau\in\Sigma\mid \tau\text{ is a PIA}\}\vdash_{\mathfrak{I}_c}\sigma$ if and only if $\Sigma\setminus\{\tau\in\Sigma\mid \tau\text{ is a PIA}\}\models\sigma$.
\end{proof}
\begin{customthm}{\ref{impl_comp}.}
The implication problems of Theorems \ref{thm-completecert}, \ref{thm-completepos}, and \ref{thm-completedisjointcertpos_cert} are in polynomial time.
\end{customthm}
\begin{proof}
    (Theorem \ref{thm-completecert}.) By Theorem \ref{equiv}, the implication problem for CIAs is equivalent to the implication problem for IAs, so the cubic time algorithm~\cite{geiger:1991} can be used.
    
(Theorem \ref{thm-completepos}.) The theorem states that the set $\mathfrak{I}_p$ is sound and complete for the $(\pia,\pia^*)$-implication problem. 
    Deciding whether \( X \boto_p Y \) can be derived from \( \Sigma \) using the inference system \( \mathfrak{I}_p \) is in polynomial time. When both \( X \) and \( Y \) are non-empty, it suffices to remove from each the attributes that \( \Sigma \) designates as constants, yielding reduced sets \( X^* \) and \( Y^* \). One then checks whether \( \Sigma \) contains a PIA \( X' \boto_p Y' \) or \( Y' \boto_p X' \) such that \( X^* \subseteq X' \) and \( Y^* \subseteq Y' \). If either \( X \) or \( Y \) is empty, the trivial independence axiom applies.
    
(Theorem \ref{thm-completedisjointcertpos_cert}.)
 Let $\Sigma$ be a set of disjoint CIAs and PIAs, and  let $\sigma$ be a disjoint CIA. By Theorems \ref{equiv} and \ref{poscertthm}, $\Sigma\models\sigma$ if and only if $ind(\Sigma\setminus\{\tau\in\Sigma\mid \tau\text{ is a PIA}\})\models ind(\sigma)$.
    This is an instance of implication problem for IAs, and therefore the problem is in cubic time~\cite{geiger:1991}.
\end{proof}
\end{document}